\newtheorem{theorem}{Theorem}[section]
\newtheorem{lemma}[theorem]{Lemma}
\theoremstyle{definition}
\theoremstyle{remark}
\newtheorem{remark}[theorem]{Remark}
\theoremstyle{algorithm}
\newtheorem{algorithm}[theorem]{Algorithm}
\theoremstyle{corollary}
\theoremstyle{example}
\newtheorem{example}[theorem]{Example}
\title{Distributed Detection Fusion via Monte Carlo Importance Sampling}
\author{Hang Rao,  Xiaojing Shen\thanks{This work was supported in part by  the open research funds of BACC-STAFDL of China
under Grant No. 2015afdl010,  the special funds of NEDD of China under Grant No. 201314, the NSF No. 61273074£© and the PCSIRT1273. Hang Rao, Xiaojing Shen (corresponding author), Yunmin Zhu and Jianxin Pan are  with Department of Mathematics, Sichuan
University, Chengdu, Sichuan 610064, China. E-mail: shenxj@scu.edu.cn,  ymzhu@scu.edu.cn, jianxin.pan@manchester.ac.uk.}, ~Yunmin Zhu and Jianxin Pan}
\begin{document}
 \maketitle
\begin{abstract}
Distributed detection fusion with high-dimension conditionally dependent observations is known to be a challenging problem. When a fusion rule is fixed, this paper attempts to make progress on this problem for the large sensor networks by proposing a new Monte Carlo framework. Through the Monte Carlo importance sampling, we derive a necessary condition for optimal sensor decision rules in the sense of minimizing the approximated Bayesian cost function. Then, a Gauss-Seidel/person-by-person optimization algorithm can be obtained to search the optimal sensor decision rules. It is proved that the discretized algorithm is finitely convergent. The complexity of the new algorithm is  $O(LN)$ compared with  $O(LN^L)$ of the previous algorithm where $L$ is the number of sensors and $N$ is a constant. Thus, the proposed methods allows us to design the large sensor networks with general high-dimension dependent observations. Furthermore, an interesting result is that, for the fixed AND or OR fusion rules, we can analytically derive the optimal solution in the sense of minimizing the approximated Bayesian cost function. In general, the solution of the Gauss-Seidel algorithm is only local optimal. However, in the new framework, we can prove that the solution of Gauss-Seidel algorithm is same as the analytically optimal solution in the case of the AND or OR fusion rule. The typical examples with dependent observations and large number of sensors are examined under this new framework. The results of numerical examples demonstrate the effectiveness of the new algorithm.
\end{abstract}

\noindent{\bf keywords:} Distributed detection, Monte Carlo importance sampling, dependent observations, sensor decision rule, fusion rule

\section{Introduction}\label{sec_1}

Distributed signal detection has received significant attention in  surveillance 
applications over the past thirty years \cite{Tenney-Sandell81,Tsitsiklis-Athans85,Chair-Varshney86, Varshney97,Blum-Kassam-Poor97,Chen-Willett05,Chen-Tong-Varshney06,Zhu-Zhou-Shen-Song-Luo12}. 
Tenney and Sandell \cite{Tenney-Sandell81} firstly considered Bayesian formulation of distributed detection for parallel sensor network structures and proved that the optimal decision rules at the sensors are likelihood ratio (LR) for conditionally independent sensor observations. However, the optimal thresholds of LR at individual sensors can be only obtained by
solving a set of coupled nonlinear equations. When the sensor decision rules are fixed, Chair and Varshney \cite{Chair-Varshney86} derived an optimal fusion rule based on the LR test.
For conditionally independent sensor observations, many excellent results on distributed detection have been derived and are summarized  in \cite{Varshney97} and references therein. The emerging wireless sensor networks \cite{Chen-Tong-Varshney06} motivated the optimality of LR thresholds to be extended to non-ideal detection systems in which sensor outputs are to be communicated through noisy, possibly coupled channels to the fusion center \cite{Chen-Willett05,Kashyap06,Chen-Chen-Varshney09}. 

There is much less attention on the studies of sensor decision rules for generally dependent observations which were considered to be difficult (see, e.g., \cite{Tenney-Sandell81,Tsitsiklis-Athans85,Hoballah-Varshney89}). Tsitsiklis and Athans \cite{Tsitsiklis-Athans85} provided a rigorous mathematical analysis to demonstrate the computational difficulty in obtaining the optimal sensor decision rules for dependent sensor observations. However, some progresses have been made for the special dependent observations cases (see, e.g., \cite{Drakopoulos-Lee91,Kam-Zhu-Gray-Steven92,Blum-Kassam92,Tang-Pattipati-Kleinman92,Chen-Papamarcou95,Yan_Blum01, Willett-Swaszek-Blum00}). 
Willett et al. \cite{Willett-Swaszek-Blum00} discussed difficulties for dealing with dependent observations. Zhu et al.\cite{Zhu-Blum-Luo-Wong00} proposed a computationally efficient iterative algorithm which computes a discrete approximation of the optimal sensor decision rules for general dependent observations and a fixed fusion rule. This algorithm converges in finite steps. In \cite{Shen-Zhu-He-You11}, the authors developed an efficient algorithm to simultaneously search for the optimal fusion rule and the optimal sensor rules by combining the methods of Chair and Varshney \cite{Chair-Varshney86} and Zhu et al. \cite{Zhu-Blum-Luo-Wong00}. Recently, a new framework for distributed detection with conditionally dependent observations was introduced in \cite{Chen-Chen-Varshney12}, which can identify several classes of problems with dependent observations whose optimal sensor decision rules resemble the ones for the independent case.

Although large sensor networks have attracted much attention in both theory and application \cite{Tsitsiklis88,Chamberland-Veeravalli03,Niu-Varshney-05}, the studies of sensor decision rules for large sensor networks with general dependent observations have had little progress. The fundamental reason is that the computation complexity is $O(LN^L)$ for the previous algorithms, where  $L$ is the number of sensors and $N$ is a given constant. In this paper, we propose a new Monte Carlo framework to overcome the limitation of the discretized algorithms in \cite{Zhu-Blum-Luo-Wong00,Shen-Zhu-He-You11} for the large sensor networks. Through the Monte Carlo importance sampling \cite{Liu01,Robert05}, the Bayesian cost function is approximated by the sample average by the strong law of large number. Then, we derive a necessary condition for optimal sensor decision rules so that a Gauss-Seidel optimization algorithm can be obtained to search the optimal sensor decision rules. It is proved that the new discretized algorithm is finitely convergent. The complexity of the new algorithm is order of $O(LN)$ compared with $O(LN^L)$ of the algorithms in \cite{Zhu-Blum-Luo-Wong00,Shen-Zhu-He-You11}.  Thus, the proposed methods allows us to design the large sensor networks with general dependent observations. Furthermore, an interesting result is that, for the fixed AND or OR fusion rules, we can analytically derive the optimal solution in the sense of minimizing the approximated Bayesian cost function. In general, the solution of the Gauss-Seidel algorithm is only local optimal. However, in the new framework, we can prove that the solution of Gauss-Seidel algorithm is same as the analytically optimal solution when the fusion rule is the AND or OR.
The typical examples with dependent observations and large number of sensors are examined under this new framework.  The results of numerical examples demonstrate the effectiveness of the new algorithm. The performance of the new algorithm based on Mixture-Gaussian trial distribution is better than that based on Gaussian trial distribution.

The rest of the paper is organized as follows. Preliminaries are given in Section \ref{sec_2}, including problem formulation and  Monte Carlo approximation of the cost function. In Section \ref{sec_3}, necessary conditions for optimal sensor decision rules are given. In Section \ref{sec_4}, a Gauss-Seidel iterative algorithm is presented based on the necessary conditions. The convergence of this algorithm is proved. For the fixed AND or OR fusion rules, the optimal solution in the sense of minimizing the approximated Bayesian cost function can be analytically derived. Moreover, we prove that the solution of Gauss-Seidel algorithm is same as the analytically optimal solution in the case of the AND or OR fusion rule. In Section \ref{sec_5}, numerical examples are given that exhibit the effectiveness of the new algorithm class. In Section \ref{sec_6}, we draw conclusions.

\section{Preliminaries}\label{sec_2}
\subsection{Problem formulation}\label{sec_2_1}
The $L$-sensor Bayesian detection model with two hypotheses $H_0$ and $H_1$ are considered as follows. A parallel architecture is assumed.
The $i$th sensor compresses the $n_i$-dimensional vector observation $y_i$ to one bit:      $I_i(y_i):\mathbb{R}^{n_i}\rightarrow\{0,1\},i=1,\ldots,L$.
In this paper, we consider deterministic (non-randomized) decision rules. When the fusion rule $F$ is fixed , the distributed multisensor Bayesian decision problem is to minimize the following Bayesian cost function by optimizing the sensor decision rule $I_1(y_1),\ldots,I_L(y_L)$,
\begin{eqnarray}
\nonumber &&C(I_1(y_1),\ldots,I_L(y_L);F)\\
\nonumber &=&C_{00}P_{0}P(F=0|H_0)+C_{01}P_{1}P(F=0|H_1)\\
\label{Eq_1_1} &&+C_{10}P_{0}P(F=1|H_0)+C_{11}P_{1}P(F=1|H_1),
\end{eqnarray}
where $C_{i j}$ are the known cost coefficients, $P_0$ and $P_1$ are the prior probabilities for the hypotheses $H_0$ and $H_1$, and $P(F=i|H_j)$ is the probability that the fusion center decides for hypothesis $i$ given hypothesis $H_j$ is true.   The
general form of the binary fusion rule  $F$ is denoted by
an indicator function on a set $S=\{(u_1,\ldots,u_L):u_i=0/1, i=1,2,\ldots,L\}$: 
\begin{eqnarray}
\label{Eq_1_2} F((u_1,\ldots,u_L)): S\rightarrow\{0,1\}.
\end{eqnarray}
Note that a fusion rule is a binary division of the set $S$ and the
number of elements of the set $S$ is $2^L$, thus there exists
$2^{2^L}$ fusion rules. Let $s_k$ be the $k$-th element of $S$,
$k=1,\ldots,2^L$. Every $s_k$ is $L$-dimensional vector and
$s_k(i)=0$ or $1, i=1,\ldots,L$. For convenience, we denote sets
$S_0$ and $S_1$ as the elements in $S$ for which the algorithm took
decision $H_0$ and $H_1$ respectively, i.e.
\begin{eqnarray}
\label{Eq_1_3}
S_0&=&\{s_k:~F(s_k)=0,~k=1,\ldots,2^L\},\\
\label{Eq_1_4}S_1&=&\{s_k:~F(s_k)=1,~k=1,\ldots,2^L\}.
\end{eqnarray}
Moreover, we let
$\Omega=\mathbb{R}^{n_1}\times\ldots\times\mathbb{R}^{n_L}$ and
denote
\begin{eqnarray}
\nonumber \Omega_0&=&\{(y_1,\ldots,y_L):~I_1(y_1)=s_k(1),\ldots,I_L(y_L)=s_k(L);\\
\label{Eq_1_5}&&~~~~~~~~~~~~~~~~~~~~F(s_k)=0,~k=1,\ldots,2^L\},\\~
\nonumber \Omega_1&=&\{(y_1,\ldots,y_L):~I_1(y_1)=s_k(1),\ldots,I_L(y_L)=s_k(L);\\
\label{Eq_1_6}&&~~~~~~~~~~~~~~~~~~~~F(s_k)=1,~k=1,\ldots,2^L\}.
\end{eqnarray}
Obviously, $S=S_0\bigcup S_1$ and  $\Omega=\Omega_0\bigcup
\Omega_1$. Suppose that $p(y_1,y_2,\ldots,y_L|H_1)$ and
$p(y_1,y_2,\ldots,y_L|H_0)$ are the known  conditional joint
probability density functions under each hypothesis.

Substituting the definitions of fusion rule $F$ and sensor decision rule $I_i(y_i)$ into (\ref{Eq_1_1}) and simplifying, we have
\begin{eqnarray}%
\nonumber&&C(I_1(y_1),\ldots,I_L(y_L);F)\\
\nonumber&=&C_{10}P_{0}+C_{11}P_{1}+\int_{\Omega_0}\{[P_1(C_{01}-C_{11})p(y_1,\ldots,y_L|H_1)]\nonumber\\
\nonumber&&-[P_0(C_{10}-C_{00})p(y_1,\ldots,y_L|H_0)]\}dy_1\cdots dy_L \nonumber \\
\label{Eq_1_7}
&=&c+\int_{\Omega}I_{\Omega_0}(y_1,\ldots,y_L)\hat{L}(y_1,\ldots,y_L)dy_1\ldots dy_L,
\end{eqnarray}
where $I_{\Omega_0}(y_1,\ldots,y_L)$ is an indicator function on $\Omega_0$,
\begin{eqnarray}%
\label{Eq_1_8}\hat{L}(y_1,\ldots,y_L)=ap(y_1,\ldots,y_L|H_1)-bp(y_1,\ldots,y_L|H_0),\\
\label{Eq_1_9}a=P_1(C_{01}-C_{11}), \qquad  b=P_0(C_{10}-C_{00}), \qquad c=C_{10}P_{0}+C_{11}P_{1}.
\end{eqnarray}
$a$, $b$, $c$ are fixed constants.


The indicator function $I_{\Omega_0}(y_1,\ldots,y_L)$ can be written as $L$ equivalent polynomials of the sensor decision rules $I_1(y_1), \ldots, I_L(y_L)$ and the fusion rule $F$ as follows (see \cite{Shen-Zhu-He-You11}):
%

\begin{eqnarray}%
\nonumber I_{\Omega_0}(y_1,\ldots,y_L)
\nonumber &=&[1-I_1(y_1)]P_{11}(I_2(y_2),\ldots,I_L(y_L);F) \\
\label{Eq_1_10}&&+P_{12}(I_2(y_2),\ldots,I_L(y_L);F),\\
\nonumber&&\cdots\cdots\\
\nonumber&=&[1-I_L(y_L)]P_{L1}(I_1(y_1),I_2(y_2),\ldots,I_{L-1}(y_{L-1});F)\\
\label{Eq_1_11}&&+P_{L2}(I_1(y_1),I_2(y_2),\ldots,I_{L-1}(y_{L-1});F)
\end{eqnarray}
where, for $j=1,\ldots, L$,
\begin{eqnarray}%
\nonumber&&P_{j1}(I_1(y_1),\ldots,I_{j-1}(y_{j-1}),I_{j+1}(y_{j+1}),\ldots,I_{L}(y_{L});F)\\
\label{Eq_1_12}&\triangleq&\sum_{k=1}^{2^L}\{[1-F(s_k)][1-2s_k(j)]\prod_{m=1,m\neq j}^{2^L}[s_k(m)I_m(y_m)+(1-s_k(m))(1-I_m(y_m))]\} \\[5mm]
\nonumber&&P_{j2}(I_1(y_1),\ldots,I_{j-1}(y_{j-1}),I_{j+1}(y_{j+1}),\ldots,I_{L}(y_{L});F) \\
\label{Eq_1_13} &\triangleq&\sum_{k=1}^{2^L}\{[1-F(s_k)]s_k(j)\prod_{m=1,m\neq j}^{2^L}[s_k(m)I_m(y_m)+(1-s_k(m))(1-I_m(y_m))]\}
\end{eqnarray}
Note that both $P_{j1}(I_1(y_1),\ldots,I_{j-1}(y_{j-1}),I_{j+1}(y_{j+1}),\ldots,I_{L}(y_{L});F)$ and $P_{j2}(I_1(y_1),\ldots,I_{j-1}(y_{j-1}),\\
I_{j+1}(y_{j+1}),\ldots,I_{L}(y_{L});F)$ are  independent of $I_j(y_j)$ for $j=1,\ldots,L$. For convenience, we also denote them by $P_{j1}(\cdot)$, $P_{j2}(\cdot)$, respectively. Moreover, (\ref{Eq_1_12}) is also a key equation in the following results.

\subsection{Monte Carlo importance sampling}\label{sec_2_2}
In this section, we present an approximation of the cost function (\ref{Eq_1_7}) by Monte Carlo importance sampling (see, e.g.,  \cite{Liu01,Robert05}). More specifically, assume that the samples $Y_{1},\ldots,Y_{N}$ are from population $Y$ with a given trial distribution $g(y_1,y_2,\ldots,y_L)$, where $Y_i=[Y_{1i}, Y_{2i},\ldots,Y_{Li}]^T$.  From (\ref{Eq_1_7}),
\begin{eqnarray}%
\nonumber&&C(I_1(y_1),\ldots,I_L(y_L);F)\\
\label{Eq_1_15}&=&\int_{\Omega}\frac{I_{\Omega_0}(y_1,y_2,\ldots,y_L)\hat{L}(y_1,y_2,\ldots,y_L)g(y_1,y_2,\ldots,y_L)}{g(y_1,y_2,\ldots,y_L)}dy_1\ldots dy_L+c \\
\label{Eq_1_16}&=&\mathbb{E}_g \frac{I_{\Omega_0}(Y)\hat{L}(Y)}{g(Y)}+c \\
\label{Eq_1_17}&\approx&\frac{1}{N}\sum_{i=1}^{N}\frac{I_{\Omega_0}(Y_{1i},Y_{2i},\ldots,Y_{Li})\hat{L}(Y_{1i},Y_{2i},\ldots,Y_{Li})}{g(Y_{1i},Y_{2i},\ldots,Y_{Li})}+c\\
\label{Eq_1_18}&\triangleq&C_{MC}(I_1(y_1),\ldots,I_L(y_L);F,N),
\end{eqnarray}
where $g(y_1,y_2,\ldots,y_L)$ is the trial density such that (\ref{Eq_1_15}) is well-defined. (\ref{Eq_1_16}) is from $Y\sim $ $g(y_1,y_2,$ $\ldots,y_L)$.  (\ref{Eq_1_17}) is denoted by $C_{MC}(I_1(y_1),\ldots,I_L(y_L);F,N)$. Based on the strong law of large number, (\ref{Eq_1_16}) can be approximated by (\ref{Eq_1_17}), i.e.,  $C_{MC}(I_1(y_1),\ldots,I_L(y_L);F,N)\rightarrow C(I_1(y_1),\ldots,$ $I_L(y_L);F),$ ~a.s. as $N\rightarrow\infty$. The optimal trial distribution is $g(y_1,y_2,\ldots$, $y_L)\varpropto |I_{\Omega_0}(y_1,y_2,\ldots,y_L)$ $\hat{L}(y_1,y_2,\ldots,y_L)|$ (see, e.g., \cite{Liu01,Robert05}). By (\ref{Eq_1_10}), (\ref{Eq_1_11}) and (\ref{Eq_1_17}), so that we have

\begin{eqnarray}%
\nonumber  && C_{MC}(I_1(y_1),\ldots,I_L(y_L);F,N)\\[3mm]
\nonumber &=&\frac{1}{N}\sum_{i=1}^{N}\frac{[1-I_1(Y_{1i})]P_{11}(I_2(Y_{2i}),\ldots,I_L(Y_{Li});F)\hat{L}(Y_{1i},Y_{2i},\ldots,Y_{Li})}{g(Y_{1i},Y_{2i},\ldots,Y_{Li})}\\
\label{Eq_1_19}&&+\frac{1}{N}\sum_{i=1}^{N}\frac{P_{12}(I_2(Y_{2i}),\ldots,I_L(Y_{Li});F)\hat{L}(Y_{1i},Y_{2i},\ldots,Y_{Li})}{g(Y_{1i},Y_{2i},\ldots,Y_{Li})}+c\\
\nonumber     &&\cdots\cdots  \\
\nonumber      &=&\frac{1}{N}\sum_{i=1}^{N}\frac{[1-I_L(Y_{Li})]P_{L1}(I_1(Y_{1i}),\ldots,I_{L-1}(Y_{(L-1)i});F)\hat{L}(Y_{1i},Y_{2i},\ldots,Y_{Li})}{g(Y_{1i},Y_{2i},\ldots,Y_{Li})}\\
\label{Eq_1_20} &&+\frac{1}{N}\sum_{i=1}^{N}\frac{P_{L2}(I_1(Y_{1i}),\ldots,I_{L-1}(Y_{(L-1)i});F)\hat{L}(Y_{1i},Y_{2i},\ldots,Y_{Li})}{g(Y_{1i},Y_{2i},\ldots,Y_{Li})}+c
\end{eqnarray}

\section{Necessary Conditions For Optimum Sensor Decision Rules} \label{sec_3}
The distributed detection fusion problem is to minimize the Bayesian cost function $C(I_1(y_1),$ $\ldots,I_L(y_L);$ $F)$ (\ref{Eq_1_7}). Based on the Monte Carlo approximation  (\ref{Eq_1_18}), we concentrate on selecting a set of optimal sensor decision rules $I_1(y_1),\ldots,I_L(y_L)$ such that the approximated cost function $C_{MC}(I_1(y_1),\ldots,I_L(y_L);F,N)$ is minimum.

Firstly, we prove that the
minimum of the $C_{MC}(I_1(y_1),\ldots,I_L(y_L);F,N)$ cost functional converges to the infimum of the cost function
$C(I_1,\ldots,I_L;F)$ as the sample size $N$ tends to infinity, under
some mild assumptions. Since the deterministic (non-randomized) decision rules are considered in this paper, in the following sections, we assume that the samples drawn from the trial distribution have been fixed so that $C_{MC}(I_1(y_1),\ldots,I_L(y_L);F,N)$  has no randomness.

\begin{theorem}\label{thm_001}
Let $C_{inf}$ be the infimum of $C(I_1,\ldots,I_L;F)$ and $C_{MC}^*(I_1,\ldots,I_L;F,N)$ be the minimum of the Monte Carlo approximation
$C_{MC}(I_1,\ldots,I_L;F,N)$ (\ref{Eq_1_18}) where $I_1,\ldots,I_L$ are decision variables. If $C_{MC}(I_1,\ldots,I_L;F,N)$ satisfies
\begin{eqnarray}%
\label{Eq_3_0001}|C(I_1,\ldots,I_L;F)- C_{MC}(I_1,\ldots,I_L;F,N)|<\frac{\delta}{\sqrt{N}},
\end{eqnarray}%
where the constant $\delta$ does not depend on $I_1,\ldots,I_L, F$ and $N$,
then we have
\begin{eqnarray}
\label{Eq_3_001} \lim_{N\rightarrow\infty}C_{MC}^*(I_1,\ldots,I_L;F,N)=\inf_{I_{1},\cdots,I_{L}}C(I_1,\ldots,I_L;F) \triangleq C_{inf}.
\end{eqnarray}
\end{theorem}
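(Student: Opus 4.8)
The plan is to prove this by a deterministic squeeze (sandwich) argument that converts the uniform approximation bound (\ref{Eq_3_0001}) into matching upper and lower estimates for the minimum value $C_{MC}^*(I_1,\ldots,I_L;F,N)$. Write $\epsilon_N = \delta/\sqrt{N}$; the crucial feature to exploit is that, by hypothesis, $\delta$ (hence $\epsilon_N$) does not depend on the decision rules, so (\ref{Eq_3_0001}) may be read as the two-sided pointwise estimate
\begin{eqnarray}
\label{Eq_plan_1}
C(I_1,\ldots,I_L;F) - \epsilon_N < C_{MC}(I_1,\ldots,I_L;F,N) < C(I_1,\ldots,I_L;F) + \epsilon_N,
\end{eqnarray}
holding for every admissible $I_1,\ldots,I_L$ with one and the same $\epsilon_N$.

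For the upper bound I would use the right inequality in (\ref{Eq_plan_1}): for each $I_1,\ldots,I_L$ we have $C_{MC}^*(\cdot;F,N) \le C_{MC}(I_1,\ldots,I_L;F,N) < C(I_1,\ldots,I_L;F) + \epsilon_N$, and taking the infimum over $I_1,\ldots,I_L$ on the right gives $C_{MC}^*(\cdot;F,N) \le C_{inf} + \epsilon_N$. For the lower bound I would combine the left inequality in (\ref{Eq_plan_1}) with $C(I_1,\ldots,I_L;F) \ge C_{inf}$, which yields $C_{MC}(I_1,\ldots,I_L;F,N) > C_{inf} - \epsilon_N$ for every $I_1,\ldots,I_L$; since $C_{inf} - \epsilon_N$ is then a lower bound for all values of $C_{MC}$, its minimum satisfies $C_{MC}^*(\cdot;F,N) \ge C_{inf} - \epsilon_N$. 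Together these give
\begin{eqnarray}
\label{Eq_plan_2}
\left| C_{MC}^*(I_1,\ldots,I_L;F,N) - C_{inf} \right| \le \epsilon_N = \frac{\delta}{\sqrt{N}}.
\end{eqnarray}

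Letting $N\to\infty$ in (\ref{Eq_plan_2}) forces $C_{MC}^*(\cdot;F,N)\to C_{inf}$, which is exactly (\ref{Eq_3_001}). I do not expect a genuine obstacle inside this argument: once the error bound is uniform, the conclusion is a one-line squeeze. The only point requiring care is precisely this uniformity — that a single $\delta$ controls the error simultaneously for all decision rules $I_1,\ldots,I_L$ — since it is what legitimizes passing the pointwise inequalities through the infimum and the minimum; were the error constant allowed to depend on $I_1,\ldots,I_L$, this interchange would fail. Accordingly, the substantive work lies not in the present theorem but in establishing the hypothesis (\ref{Eq_3_0001}), i.e.\ a uniform $O(1/\sqrt{N})$ Monte Carlo error, which is assumed at this stage.
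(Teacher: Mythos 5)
Your proof is correct, and it takes a genuinely more direct route than the paper's. The paper argues in two halves: for the upper half it fixes $\epsilon>0$, picks a rule with $C(I_1,\ldots,I_L;F)\leq C_{inf}+\tfrac{1}{2}\epsilon$, and uses (\ref{Eq_3_0001}) to conclude $\limsup_{N\rightarrow\infty}C_{MC}^*(\cdot;F,N)\leq C_{inf}$; for the lower half it argues by contradiction, extracting a subsequence $N_k\rightarrow\infty$ and minimizers $(I_1^{k},\ldots,I_L^{k})$ with $C_{MC}^*(\cdot;F,N_k)<C_{inf}-\tau$, then applying (\ref{Eq_3_0001}) at those minimizers to exhibit a rule whose true cost falls below $C_{inf}$, a contradiction. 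Your squeeze replaces both halves with direct inequalities: $C_{MC}^*(\cdot;F,N)\leq C_{inf}+\delta/\sqrt{N}$ by passing the uniform upper estimate through the infimum, and $C_{MC}^*(\cdot;F,N)\geq C_{inf}-\delta/\sqrt{N}$ because the uniform lower estimate holds at every rule, in particular at whichever rule attains the minimum. Both arguments hinge on exactly the same point you flag --- that one $\delta$ works simultaneously for all decision rules --- and your lower bound is essentially the paper's contradiction argument run forwards. What your version buys is the avoidance of subsequences and $\limsup/\liminf$ bookkeeping, and a strictly stronger conclusion: the quantitative rate $\left|C_{MC}^*(\cdot;F,N)-C_{inf}\right|\leq \delta/\sqrt{N}$, i.e.\ explicit $O(N^{-1/2})$ convergence rather than bare convergence, which is also the natural companion to the central-limit-theorem justification of hypothesis (\ref{Eq_3_0001}) given in the paper's subsequent remark.
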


\begin{proof}
By the definition of $C_{inf}$, for arbitrary $\epsilon > 0$, there exists a set of sensor rules $(I_1,\ldots,I_L)$ such that
\begin{eqnarray}
\nonumber   C(I_1,\ldots,I_L;F)\leq C_{inf}+\frac{1}{2}\epsilon
\end{eqnarray}
Since definition of  $C_{MC}(I_1,\ldots,I_L;F,N)$ and (\ref{Eq_3_0001}), there exists $N^{*}=(\frac{2\delta}{\epsilon})^2>0$ such that for any $N\geq N^{*}$
\begin{eqnarray}
\nonumber  C_{MC}(I_1,\ldots,I_L;F,N) \leq C(I_1,\ldots,I_L;F)+\frac{1}{2}\epsilon.
\end{eqnarray}
Thus, $C_{MC}(I_1,\ldots,I_L;F,N) \leq C_{inf}+\epsilon$.
By the definition of $C_{MC}^*(I_1,\ldots,I_L;F,N)$, we have
\begin{eqnarray}
\nonumber  C_{MC}^*(I_1,\ldots,I_L;F,N)\leq C_{MC}(I_1,\ldots,I_L;F,N) \leq C_{inf}+\epsilon   ~~for~\forall N\geq N^{*},
\end{eqnarray}
which implies that
\begin{eqnarray}
\nonumber  \lim\sup_{N\rightarrow\infty}C_{MC}^*(I_1,\ldots,I_L;F,N)\leq C_{inf}+\epsilon.
\end{eqnarray}
Since $\epsilon$ is arbitrary, we have
\begin{eqnarray}
\label{Eq_3_006} \lim\sup_{N\rightarrow\infty}C_{MC}^*(I_1,\ldots,I_L;F,N)\leq C_{inf}.
\end{eqnarray}

On the other hand, suppose that
\begin{eqnarray}
\nonumber \lim\inf_{N\rightarrow\infty}C_{MC}^*(I_1,\ldots,I_L;F,N)< C_{inf}.
\end{eqnarray}
Then there would be a positive constant $\tau > 0$, and a sequence $\{N_{k}\}$ such that $N_{k}\rightarrow\infty$, and
\begin{eqnarray}
\label{Eq_3_008} C_{MC}^*(I_1,\ldots,I_L;F,N_{k})< C_{inf}-\tau.
\end{eqnarray}
For every such $C_{MC}^*(I_1,\ldots,I_L;F,N_{k})$, there must be a set of $(I_1^{k},I_2^{k}\ldots,I_L^{k})$ such that
\begin{eqnarray}
\nonumber C_{MC}^*(I_1,\ldots,I_L;F,N_{k})= C_{MC}(I_1^{k},I_2^{k}\ldots,I_L^{k};F,N_{k}).
\end{eqnarray}
Using the inequality (\ref{Eq_3_0001}) and (\ref{Eq_3_008}), for large enough $K$, we have $\frac{\delta}{\sqrt{N_{K}}}<\tau$,
\begin{eqnarray}
\nonumber C(I_1^{K},I_2^{K}\ldots,I_L^{K};F)\leq C_{MC}^*(I_1,\ldots,I_L;F,N_{K})+\tau < C_{inf},
\end{eqnarray}
which contradicts the definition of $C_{inf}$. Therefore,
\begin{eqnarray}
\label{Eq_3_011} \lim\inf_{N\rightarrow\infty}C_{MC}^*(I_1,\ldots,I_L;F,N) \geq C_{inf}.
\end{eqnarray}

By the inequality (\ref{Eq_3_006}) and (\ref{Eq_3_011}),
\begin{eqnarray}
\nonumber C_{inf}\leq \lim\inf_{N\rightarrow\infty}C_{MC}^*(I_1,\ldots,I_L;F,N)\leq \lim\sup_{N\rightarrow\infty}C_{MC}^*(I_1,\ldots,I_L;F,N)\leq C_{inf},
\end{eqnarray}
which implies that (\ref{Eq_3_001}).
\end{proof}

\begin{remark}
The assumption (\ref{Eq_3_0001}) is not restrictive, since, by the central limit theorem, the error term of this Monte Carlo approximation is $O(N^{-1/2})$ regardless of the dimensionality of $Y$ (see \cite{Liu01}).
\end{remark}

Secondly, we derive the necessary conditions for optimal sensor decision rules in the sense of minimizing $C_{MC}(I_1(y_1),$ $\ldots,I_L(y_L);F,N)$ for a parallel distributed detection system.
\begin{theorem}\label{thm_1}
If $\{I_1(y_1),\ldots,I_{L}(y_L)\}$ are a set of optimal sensor decision rules which minimize $C_{MC}(I_1(y_1),$ $\ldots,I_L(y_L);F,N)$ in (\ref{Eq_1_17}) in a parallel distributed Bayesian  detection fusion system, then $\{I_1(y_1),\ldots,I_{L}(y_L)\}$ must satisfy the following equations: 
\begin{eqnarray}
\nonumber I_1(Y_{1i})&=&I[P_{11}(I_2(Y_{2i}),I_3(Y_{3i}),\ldots,I_L(Y_{Li});F)\\
\label{Eq_3_1}&&\cdot\hat{L}(Y_{1i},Y_{2i},\ldots,Y_{Li})], \qquad for ~ i=1,\ldots,N \\
\nonumber I_2(Y_{2i})&=&I[P_{21}(I_1(Y_{1i}),I_3(Y_{3i}),\ldots,I_L(Y_{Li});F)\\
\label{Eq_3_2}&&\cdot\hat{L}(Y_{1i},Y_{2i},\ldots,Y_{Li})], \qquad for ~ i=1,\ldots,N \\
\nonumber    & \cdots\cdots \\
\nonumber I_L(Y_{Li})&=&I[P_{L1}(I_1(Y_{1i}),(I_2(Y_{2i})),\ldots,I_{L-1}(Y_{(L-1)i});F)\\
\label{Eq_3_3}&&\cdot\hat{L}(Y_{1i},Y_{2i},\ldots,Y_{Li})], \qquad for ~ i=1,\ldots,N
\end{eqnarray}
where $P_{j1}(\cdot), j=1,\ldots,L$ are defined by (\ref{Eq_1_12}), $I[\cdot]$ is an indicator function denoted as follows:\\
\begin{eqnarray}
\label{Eq_3_4}I[x]&=&\left\{
                      \begin{array}{ll}
                       1, & \hbox{ if $x\geq0$;} \\
                        0, & \hbox{ if $ x<0$.}
                      \end{array}
                    \right.
\end{eqnarray}
\end{theorem}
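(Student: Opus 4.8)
The plan is to read these as person-by-person (coordinate-wise) optimality conditions, exploiting the affine structure already exhibited in (\ref{Eq_1_19})--(\ref{Eq_1_20}). For a fixed index $j$, grouping those terms lets us write the approximated cost as
\begin{eqnarray}
\nonumber C_{MC} = \frac{1}{N}\sum_{i=1}^{N}\frac{[1-I_j(Y_{ji})]\,P_{j1}(\cdot)\,\hat{L}(Y_{1i},\ldots,Y_{Li})}{g(Y_{1i},\ldots,Y_{Li})} + R_j,
\end{eqnarray}
where $R_j$ collects the $P_{j2}(\cdot)$ sum, the constant $c$, and every factor independent of $I_j$. This is legitimate precisely because, as noted after (\ref{Eq_1_13}), both $P_{j1}(\cdot)$ and $P_{j2}(\cdot)$ do not involve $I_j$. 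Consequently, with the other $L-1$ rules held fixed, $C_{MC}$ is an affine functional of the bit vector $(I_j(Y_{j1}),\ldots,I_j(Y_{jN}))$, and optimality of the whole collection forces optimality in each coordinate $j$ separately.

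First I would use the separability of the displayed sum across the sample index $i$. Since the samples are held fixed and the trial density satisfies $g(Y_{1i},\ldots,Y_{Li})>0$, the $i$-th summand depends on $I_j$ only through the single bit $I_j(Y_{ji})\in\{0,1\}$, and the positive factor $1/(N\,g(\cdot))$ is irrelevant to its minimization; only the sign of $P_{j1}(\cdot)\hat{L}(Y_{1i},\ldots,Y_{Li})$ matters. Minimizing $[1-I_j(Y_{ji})]\,P_{j1}(\cdot)\hat{L}(\cdot)$ over the two admissible values is then immediate: if $P_{j1}(\cdot)\hat{L}(\cdot)\ge 0$ the minimum is attained by forcing $1-I_j(Y_{ji})=0$, i.e.\ $I_j(Y_{ji})=1$, whereas if $P_{j1}(\cdot)\hat{L}(\cdot)<0$ it is attained at $I_j(Y_{ji})=0$. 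By the definition (\ref{Eq_3_4}) of $I[\cdot]$ this is exactly $I_j(Y_{ji})=I[\,P_{j1}(\cdot)\hat{L}(Y_{1i},\ldots,Y_{Li})\,]$. As $i$ and $j$ were arbitrary, assembling these coordinatewise minimizations produces the claimed system (\ref{Eq_3_1})--(\ref{Eq_3_3}).

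The one point requiring care---and what I expect to be the main, if modest, obstacle---is the passage from a rule $I_j$ defined on all of $\mathbb{R}^{n_j}$ to $N$ independently choosable bits. This is valid provided the $j$-th sample coordinates $Y_{j1},\ldots,Y_{jN}$ are pairwise distinct, so that $I_j$ may be prescribed independently at each $Y_{ji}$; for a trial distribution absolutely continuous in $y_j$ this holds almost surely, which is the relevant regime. Were two coordinates to coincide, the associated summands would have to be optimized jointly and the pointwise equation would be replaced by a condition on the sign of their aggregate---a degenerate case I would simply exclude. I would also flag the tie-breaking: when $P_{j1}(\cdot)\hat{L}(\cdot)=0$ either bit is optimal, so the stated conditions hold under the convention $I[0]=1$ adopted in (\ref{Eq_3_4}) without loss of generality. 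Finally I would stress that these are necessary (single-coordinate stationarity) conditions, not sufficient ones for global optimality, because the coupling through the $P_{j1}(\cdot)$ terms renders the joint minimization combinatorial---which is precisely the motivation for the Gauss--Seidel iteration developed in the next section.
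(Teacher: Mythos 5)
Your proposal is correct and follows essentially the same route as the paper's proof: exploit the decomposition (\ref{Eq_1_19})--(\ref{Eq_1_20}), note that $P_{j1}(\cdot)$, $P_{j2}(\cdot)$ are independent of $I_j$, and minimize the affine term $[1-I_j(Y_{ji})]P_{j1}(\cdot)\hat{L}(\cdot)/g(\cdot)$ pointwise in each sample $i$, which yields the indicator form via (\ref{Eq_3_4}). Your added remarks on pairwise-distinct sample coordinates and the tie case $P_{j1}(\cdot)\hat{L}(\cdot)=0$ are refinements the paper silently glosses over, but they do not change the argument.
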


\begin{proof}
Since both $P_{j1}(\cdot)$ and $P_{j2}(\cdot)$ are independent of $I_j(y_j)$ for $j=1,\ldots,L$, if $I_1(y_1)$  minimizes the Monte Carlo approximation of (\ref{Eq_1_17}), then $I_1(Y_{1i})$ should be equal to 1 when $P_{11}(I_2(Y_{2i}),I_3(Y_{3i}),$ $\ldots,I_L(Y_{Li});$ $F)\hat{L}(Y_{1i},Y_{2i},\ldots,Y_{Li})$ is positive for $i=1,\ldots,N$, otherwise it should be equal to 0. Thus, we have (\ref{Eq_3_1}) by the definition of $I[x]$ in (\ref{Eq_3_4}). Similarly, by (\ref{Eq_1_20}), we have (\ref{Eq_3_2})--(\ref{Eq_3_3}).
\end{proof}


\section{Monte Carlo Gauss-Seidel Iterative Algorithm Its Convergence} \label{sec_4}
\subsection{Monte Carlo Gauss-Seidel Iterative Algorithm} \label{sec_4_1}

Let the sensor decision rules at the $k$th stage of iteration  be denoted by $\{(I_1^{k}(Y_{1i}),\ldots,I_{L}^{k}(Y_{Li})), i=1,\ldots,N\}$ with the initial set $\{(I_1^{0}(Y_{1i}),\ldots,I_{L}^{0}(Y_{Li})), i=1,\ldots,N\}$. Suppose the fusion rule is fixed. Based on Theorem \ref{thm_1}, we can drive a Gauss-Seidel iterative algorithm for minimizing $C_{MC}(I_1(y_1),$ $\ldots,I_L(y_L);F,N)$ in (\ref{Eq_1_18}) as follows.

\begin{algorithm}[Monte Carlo Gauss-Seidel iterative algorithm]\label{alg_1}
~\\
\begin{itemize}
           \item  Step 1: Draw samples $Y_{1},\ldots,Y_{N}$ from an importance density $g(y_1,y_2,\ldots,y_L)$.
           \item  Step 2: Given a fusion rule $F$ and initialize $L$ sensor decision rules $j=1,\ldots,L$,
\begin{eqnarray}
\label{Eq 4_1}  I_j^{0}(Y_{ji})=0/1 \qquad for \qquad i=1,\ldots,N.
\end{eqnarray}
           \item
Step 3: Iteratively search $L$ sensor decision rules for better system performance until a terminate criterion step 4 is satisfied. The $(k+1)$th stage of the iteration is as follows:
\begin{eqnarray}
\nonumber I_1^{k+1}(Y_{1i})&=&I[P_{11}(I_2^{k}(Y_{2i}),I_3^{k}(Y_{3i}),\ldots,I_L^{k}(Y_{Li});F)\\
\label{Eq 4_2} &&\cdot\hat{L}(Y_{1i},Y_{2i},\ldots,Y_{Li})], \qquad for ~ i=1,\ldots,N \\
\nonumber I_2^{k+1}(Y_{2i})&=&I[P_{21}(I_1^{k+1}(Y_{2i}),I_3^{k}(Y_{3i}),\ldots,I_L^{k}(Y_{Li});F)\\
\label{Eq 4_3} &&\cdot\hat{L}(Y_{1i},Y_{2i},\ldots,Y_{Li})], \qquad for ~ i=1,\ldots,N \\
\nonumber    & \cdots\cdots \\
\nonumber I_L^{k+1}(Y_{Li})&=&I[P_{L1}(I_1^{k+1}(Y_{1i}),(I_2^{k+1}(Y_{2i})),\ldots,I_{L-1}^{k+1}(Y_{(L-1)i});F)\\
\label{Eq 4_4} &&\cdot\hat{L}(Y_{1i},Y_{2i},\ldots,Y_{Li})], \qquad for ~ i=1,\ldots,N.
\end{eqnarray}
           \item Step 4: A termination criterion of the iteration process is, for $i=1,\ldots,N$
\begin{eqnarray}
\nonumber I_1^{k+1}(Y_{1i})&=&I_1^{k}(Y_{1i}) ,\\
\nonumber I_2^{k+1}(Y_{2i})&=&I_2^{k}(Y_{2i}) ,\\
\nonumber &\cdots\cdots \\
\label{Eq 4_5}  I_L^{k+1}(Y_{Li})&=&I_L^{k}(Y_{Li}).
\end{eqnarray}
\end{itemize}
\end{algorithm}

\begin{remark}\label{1}
Once we obtain $I_1(Y_{1i})$ for $i=1,\ldots, N$, then $I_1(y)$ can be obtained by defining  $I_1(y_1)=I_1(Y_{1i})$  when the distance $||y_1-Y_{1i}||$ is less than $||y_1-Y_{1j}||$, for all $j\neq i$. Similarly, we can obtain $I_i(y_i)$ for $i=2,\ldots, L$.
\end{remark}

\begin{remark}\label{2}
The main computation burden of Algorithm \ref{alg_1} is in (\ref{Eq 4_2})--(\ref{Eq 4_4}). If the number of discretized points $N_1=N_2=\ldots=N_L=N$ in (10) of \cite{Zhu-Blum-Luo-Wong00}, then $P_{j1}(\cdot)\hat{L}(Y_{1i},Y_{2i},\ldots,Y_{Li}), j=1,\ldots,L, i=1,\ldots,N$ are computed $L\times N$ times in Algorithm \ref{alg_1}. However, in \cite{Zhu-Blum-Luo-Wong00}, they are computed $LN^L$ times. In next section, we prove Algorithm \ref{alg_1} terminates in finite steps. Thus, the computation complexity of Algorithm \ref{alg_1} is $O(LN)$ compared with $O(LN^L)$ of the algorithm in \cite{Zhu-Blum-Luo-Wong00}.
\end{remark}

\subsection{Convergence of Monte Carlo Gauss-Seidel Iterative Algorithm} \label{sec_4_2}
Now we prove that Algorithm \ref{alg_1} must converge to a local optimal value and the algorithm cannot oscillate infinitely often, i.e., terminate after a finite number of iterations.

For convenience, for $j=1,\ldots,L$, we denote $C_{MC}$ (\ref{Eq_1_19})--(\ref{Eq_1_20}) in the $(k+1)$th iteration process by

\begin{eqnarray}
\nonumber&&C_{MC}(I_1^{k+1},\ldots,I_j^{k+1},I_{j+1}^{k},\ldots,I_L^{k};F,N)\\
\nonumber&=&c+\frac{1}{N}\sum_{i=1}^{N}\{[1-I_j^{k+1}(Y_{ji})]P_{j1}(I_1^{k+1}(Y_{1i}),\ldots,I_{j-1}^{k+1}(Y_{(j-1)i}),I_{j+1}^{k}(Y_{(j+1)i}),\ldots,I_{L}^{k}(Y_{Li});F,N)\\
\label{Eq 4_6}
&&+P_{j2}(I_1^{k+1}(Y_{1i}),\ldots,I_{j-1}^{k+1}(Y_{(j-1)i}),I_{j+1}^{k}(Y_{(j+1)i}),\ldots,I_{L}^{k}(Y_{Li});F,N)\}\frac{\hat{L}(Y_{1i},Y_{2i},\ldots,Y_{Li})}{g(Y_{1i},Y_{2i},\ldots,Y_{Li})}. \end{eqnarray}
Similarly, we denote the $(k+1)$th iteration process of the iterative items $P_{j1}(\cdot)\hat{L}(\cdot)$ in (\ref{Eq 4_2})--(\ref{Eq 4_4}) by
\begin{eqnarray}
\nonumber  G_{j}^{k+1}(Y_{ji})
\nonumber  &=&P_{j1}(I_1^{k+1}(Y_{1i}),\ldots,I_{j-1}^{k+1}(Y_{(j-1)i}),I_{j+1}^{k}(Y_{(j+1)i}),\ldots,I_{L}^{k}(Y_{Li});F,N)\\
\label{Eq 4_7} &&\cdot \hat{L}(Y_{1i},Y_{2i},\ldots,Y_{Li}), \qquad for ~i=1,\ldots,N, ~ j=1,\ldots,L.
\end{eqnarray}

\begin{lemma}\label{lem_1}
$C_{MC}(I_1^{k+1},\ldots,I_j^{k+1},I_{j+1}^{k},\ldots,I_L^{k};F,N)$ is non-increasing as $j$ is increased and $C_{MC}(I_1^{k+1},$ $I_2^{k+1},\ldots,I_L^{k+1};F,N)\leq C_{MC}(I_1^{k},I_2^{k},\ldots,I_L^{k};F,N)$.
\end{lemma}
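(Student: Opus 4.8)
The plan is to exploit the fact that, once the fusion rule $F$ and all sensor rules except the $j$-th are held fixed, the approximated cost $C_{MC}$ is \emph{affine} in the single rule $I_j$ and, moreover, splits into a sum of $N$ per-sample terms. Indeed, by (\ref{Eq 4_6}) together with the observation recorded after (\ref{Eq_1_13}) that $P_{j1}(\cdot)$ and $P_{j2}(\cdot)$ do not depend on $I_j$, the rule $I_j$ enters the cost only through the factor $[1-I_j(Y_{ji})]\,P_{j1}(\cdot)\,\hat L(\cdot)/g(\cdot)$ in the $i$-th summand. Hence replacing $I_j^{k}$ by $I_j^{k+1}$, with $I_1^{k+1},\ldots,I_{j-1}^{k+1},I_{j+1}^{k},\ldots,I_L^{k}$ frozen, perturbs $C_{MC}$ only through these $N$ terms, and the minimization over $I_j$ decouples into $N$ independent binary choices, one per sample index $i$.

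First I would write the one-step difference explicitly. Subtracting the two costs that agree in every argument except the $j$-th gives
\begin{eqnarray}
\nonumber && C_{MC}(I_1^{k+1},\ldots,I_j^{k+1},I_{j+1}^{k},\ldots,I_L^{k};F,N) \\
\nonumber && \qquad - C_{MC}(I_1^{k+1},\ldots,I_{j-1}^{k+1},I_j^{k},I_{j+1}^{k},\ldots,I_L^{k};F,N) \\
\nonumber &=& \frac{1}{N}\sum_{i=1}^{N}\big[I_j^{k}(Y_{ji})-I_j^{k+1}(Y_{ji})\big]\,G_j^{k+1}(Y_{ji})\,\frac{1}{g(Y_{1i},\ldots,Y_{Li})},
\end{eqnarray}
with $G_j^{k+1}(Y_{ji})=P_{j1}(\cdot)\hat L(\cdot)$ the quantity of (\ref{Eq 4_7}). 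Then I would check that every summand is non-positive. Because $g>0$, the sign of each term is controlled by $G_j^{k+1}(Y_{ji})$: when $G_j^{k+1}(Y_{ji})\geq 0$ the update (\ref{Eq 4_2})--(\ref{Eq 4_4}) with $I[\cdot]$ from (\ref{Eq_3_4}) forces $I_j^{k+1}(Y_{ji})=1$, so the factor $I_j^{k}(Y_{ji})-I_j^{k+1}(Y_{ji})\leq 0$ multiplies a non-negative quantity; when $G_j^{k+1}(Y_{ji})<0$ the rule forces $I_j^{k+1}(Y_{ji})=0$, so $I_j^{k}(Y_{ji})-I_j^{k+1}(Y_{ji})\geq 0$ multiplies a negative quantity. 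Either way the summand is $\leq 0$, so the difference is $\leq 0$; this is precisely the statement that the intermediate cost is non-increasing as $j$ grows.

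The second claim then follows by telescoping these $L$ inequalities for $j=1,\ldots,L$:
\begin{eqnarray}
\nonumber C_{MC}(I_1^{k},\ldots,I_L^{k};F,N) &\geq& C_{MC}(I_1^{k+1},I_2^{k},\ldots,I_L^{k};F,N) \\
\nonumber &\geq& \cdots \;\geq\; C_{MC}(I_1^{k+1},\ldots,I_L^{k+1};F,N),
\end{eqnarray}
which is exactly $C_{MC}(I_1^{k+1},\ldots,I_L^{k+1};F,N)\leq C_{MC}(I_1^{k},\ldots,I_L^{k};F,N)$.

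The step I expect to demand the most care is the sign bookkeeping that ties the combinatorial update rule to the monotonicity of the cost. One must confirm that the threshold in (\ref{Eq_3_4}) is taken on the sign of $P_{j1}(\cdot)\hat L(\cdot)$ rather than on the full summand, and that dividing by the strictly positive trial density $g$ leaves this sign unchanged, so that the pointwise rule really does minimize each per-sample term. A secondary point worth a remark is that the $N$ binary minimizations are genuinely independent; this presumes the coordinates $Y_{ji}$, $i=1,\ldots,N$, are treated as $N$ separate decision variables, which holds since under a continuous trial density they are almost surely distinct and no consistency constraint couples them.
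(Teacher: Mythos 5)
Your proof is correct and follows essentially the same route as the paper's: you isolate the affine dependence of $C_{MC}$ on $I_j$, express the one-step difference as the paper's term $D_j^{k+1}=\frac{1}{N}\sum_{i=1}^{N}[I_j^{k}(Y_{ji})-I_j^{k+1}(Y_{ji})]G_j^{k+1}(Y_{ji})/g(Y_{1i},\ldots,Y_{Li})$, show each summand is non-positive via the sign dichotomy built into the update rule $I[\cdot]$, and telescope over $j$. The only cosmetic difference is that the paper reaches the same difference formula by writing $1-I_j^{k+1}=(1-I_j^{k})+(I_j^{k}-I_j^{k+1})$ inside its expansion (\ref{Eq 4_6})--(\ref{Eq 04_08}), which is algebraically identical to your subtraction.
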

\begin{proof}
Using (\ref{Eq 4_6})-(\ref{Eq 4_7}), we have

\begin{eqnarray}
\nonumber&&C_{MC}(I_1^{k+1},\ldots,I_j^{k+1},I_{j+1}^{k},\ldots,I_L^{k};F,N)=\frac{1}{N}\sum_{i=1}^{N}\frac{[1-I_j^{k+1}(Y_{ji})]}{g(Y_{1i},Y_{2i},\ldots,Y_{Li})}G_{j}^{k+1}(Y_{ji})+C_{j},
\end{eqnarray}
where
\begin{eqnarray}
\nonumber C_{j}=\frac{1}{N}\sum_{i=1}^{N}P_{j2}(I_1^{k+1}(Y_{1i}),\ldots,I_{j-1}^{k+1}(Y_{(j-1)i}),I_{j+1}^{k}(Y_{(j+1)i}),\ldots,I_{L}^{k}(Y_{Li});F,N)\frac{\hat{L}(Y_{1i},Y_{2i},\ldots,Y_{Li})}{g(Y_{1i},Y_{2i},\ldots,Y_{Li})}+c
\end{eqnarray}
is a constant independent of $I_j^{k}$ and $I_j^{k+1}$.
\begin{eqnarray}
\nonumber&&C_{MC}(I_1^{k+1},\ldots,I_j^{k+1},I_{j+1}^{k},\ldots,I_L^{k};F,N) \\
\nonumber&=&\frac{1}{N}\sum_{i=1}^{N}\frac{[1-I_j^{k}(Y_{ji})]+[I_j^{k}(Y_{ji})-I_j^{k+1}(Y_{ji})]}{g(Y_{1i},Y_{2i},\ldots,Y_{Li})}G_{j}^{k+1}(Y_{ji})+C_{j}\\
\nonumber&=&\frac{1}{N}\sum_{i=1}^{N}\frac{[1-I_j^{k}(Y_{ji})]}{g(Y_{1i},Y_{2i},\ldots,Y_{Li})}G_{j}^{k+1}(Y_{ji})+C_{j}+\frac{1}{N}\sum_{i=1}^{N}\frac{[I_j^{k}(Y_{ji})-I_j^{k+1}(Y_{ji})]}{g(Y_{1i},Y_{2i},\ldots,Y_{Li})}G_{j}^{k+1}(Y_{ji})\\
\label{Eq 04_07}&=&C_{MC}(I_1^{k+1},\ldots,I_{j-1}^{k+1},I_{j}^{k},\ldots,I_L^{k};F,N)+D_{j}^{k+1},
\end{eqnarray}
where
\begin{eqnarray}
\label{Eq 04_08} &&D_{j}^{k+1}=\frac{1}{N}\sum_{i=1}^{N}\frac{[I_j^{k}(Y_{ji})-I_j^{k+1}(Y_{ji})]G_{j}^{k+1}(Y_{ji})}{g(Y_{1i},Y_{2i},\ldots,Y_{Li})}.
\end{eqnarray}
Note that (\ref{Eq 4_2})-(\ref{Eq 4_4}) implie that $I_{j}^{k+1}(Y_{ji})=0$ if and only if $G_{j}^{k+1}(Y_{ji})<0$ and $I_{j}^{k+1}(Y_{ji})=1$ if and only if $G_{j}^{k+1}(Y_{ji})\geq0$ for $i=1,\ldots,N,~j=1,\ldots,L$.
That is to say
\begin{eqnarray}
\label{Eq 04_09}[I_j^{k}(Y_{ji})-I_j^{k+1}(Y_{ji})]G_{j}^{k+1}(Y_{ji})\leq0.
\end{eqnarray}
Thus, for $\forall i,j,k$
\begin{eqnarray}
\label{Eq 04_10} \frac{[I_j^{k}(Y_{ji})-I_j^{k+1}(Y_{ji})]G_{j}^{k+1}(Y_{ji})}{g(Y_{1i},Y_{2i},\ldots,Y_{Li})}\leq0,
\end{eqnarray}
the inequality holds because $g(\cdot)$ is a trial distribution and well-defined (i.e., $g(\cdot)>0$). Then the summation of all terms  $D_{j}^{k+1}\leq 0$.
Thus, for $\forall j\leq L$,
$C_{MC}(I_1^{k+1},\ldots,I_j^{k+1},I_{j+1}^{k},\ldots,I_L^{k};F,N)\leq C_{MC}(I_1^{k+1},\ldots,I_{j-1}^{k+1},$ $I_{j}^{k},\ldots,I_L^{k};F,N)$, $C_{MC}(I_1^{k+1},I_2^{k+1},\ldots,I_L^{k+1};F,N)\leq C_{MC}(I_1^{k},I_2^{k},\ldots,I_L^{k};F,N)$.
\end{proof}
Note that $C_{MC}(I_1^{k},I_2^{k},\ldots,I_L^{k};F,N)$ is finite valued. From Lemma \ref{lem_1}, it must converge to a stationary point after a finite number of iterations.
\begin{theorem}\label{thm_2}
The $I_1^{k},I_2^{k},\ldots,I_L^{k}$ are  finitely convergent.
\end{theorem}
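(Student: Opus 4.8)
The plan is to combine two ingredients: the finiteness of the state space and the monotonicity from Lemma \ref{lem_1}. Every value $I_j^k(Y_{ji})$ lies in $\{0,1\}$, so the whole configuration $\{I_j^k(Y_{ji}): i=1,\ldots,N,\ j=1,\ldots,L\}$ ranges over the finite set $\{0,1\}^{LN}$, of cardinality $2^{LN}$, and Lemma \ref{lem_1} gives $C_{MC}(I_1^{k+1},\ldots,I_L^{k+1};F,N)\leq C_{MC}(I_1^k,\ldots,I_L^k;F,N)$. Since the sweep defined by (\ref{Eq 4_2})--(\ref{Eq 4_4}) is a deterministic map on a finite set, the sequence of configurations must eventually become periodic; the entire task is therefore to show the only periodic behaviour is a fixed point, which is exactly the termination criterion (\ref{Eq 4_5}).

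First I would note that along any cycle the cost is constant: $C_{MC}$ is non-increasing, and a cycle returns the configuration, hence the cost, to its starting value, so all $C_{MC}$ values on the cycle coincide. Consequently, over every sweep of the cycle the total cost change is zero. Because that change is the sum of the per-coordinate increments $D_j^{k+1}$ of (\ref{Eq 04_08}), and each increment satisfies $D_j^{k+1}\leq 0$ (as every summand is non-positive by (\ref{Eq 04_10})), I would conclude that $D_j^{k+1}=0$ for every $j$, and in fact that every summand $[I_j^k(Y_{ji})-I_j^{k+1}(Y_{ji})]G_j^{k+1}(Y_{ji})/g(\cdot)$ vanishes.

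Next I would exploit the sign structure of the update. By (\ref{Eq 4_2})--(\ref{Eq 4_4}) with the tie-breaking convention $I[0]=1$ in (\ref{Eq_3_4}), a flip from $1$ to $0$ forces $G_j^{k+1}(Y_{ji})<0$ and hence a strictly negative summand, contradicting the vanishing just established. Thus on a cycle no coordinate can change from $1$ to $0$, and the only admissible changes are $0\to1$ flips (consistent with a tie $G_j^{k+1}(Y_{ji})=0$). But a dynamics that only flips coordinates $0\to1$ is monotone in each of the $LN$ entries and cannot return a changed entry to its earlier value, so a period greater than one is impossible; the orbit is a single fixed point at which (\ref{Eq 4_5}) holds, and the algorithm stops. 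Counting confirms finiteness: once the cost has stabilized, each of the $LN$ entries flips $0\to1$ at most once, so at most $LN$ further sweeps remain.

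The hard part will be precisely this exclusion of oscillation: monotonicity of $C_{MC}$ on a finite set guarantees only that the cost value stabilizes, not that the configuration does, since the rules could in principle keep changing at ties $G_j^{k+1}(Y_{ji})=0$ without moving the cost. The resolution rests on the convention $I[0]=1$, which pins all cost-preserving changes to the direction $0\to1$. I would take care that the sequential, within-sweep Gauss--Seidel ordering does not smuggle in a $1\to0$ flip that escapes the per-coordinate accounting of $D_j^{k+1}$; verifying that the decomposition (\ref{Eq 04_07})--(\ref{Eq 04_08}) attributes each flip to exactly one increment $D_j^{k+1}$ is what closes this gap.
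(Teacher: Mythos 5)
Your proof is correct and follows essentially the same route as the paper: both rest on Lemma \ref{lem_1}'s per-coordinate decomposition (\ref{Eq 04_07})--(\ref{Eq 04_08}), the term-wise sign condition (\ref{Eq 04_09})--(\ref{Eq 04_10}) forcing each summand of $D_j^{k+1}$ to vanish once the cost stabilizes, and the tie-breaking convention $I[0]=1$ in (\ref{Eq_3_4}) to conclude that only $0\to1$ flips can occur thereafter, of which there are at most $LN$. Your wrapper of ``deterministic map on a finite set, hence eventually periodic, and cycles must be fixed points'' is just a repackaging of the paper's observation that a non-increasing, finitely-valued cost sequence stabilizes in finitely many steps.
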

\begin{proof}
By Lemma \ref{lem_1}, $C_{MC}(I_1^{k},I_2^{k},\ldots,I_L^{k};F,N)$ must converge to a stationary point after a finite number of iterations, i.~e.
\begin{eqnarray}
\label{Eq 04_11} &&C_{MC}(I_1^{k+1},\ldots,I_j^{k+1},I_{j+1}^{k},\ldots,I_L^{k};F,N)=C_{MC}(I_1^{k+1},\ldots,I_{j-1}^{k+1},I_{j}^{k},\ldots,I_L^{k};F,N).
\end{eqnarray}
Using (\ref{Eq 04_07}) and (\ref{Eq 04_11}), we can derive that $D_{j}^{k+1}=0$. Combine (\ref{Eq 04_08})-(\ref{Eq 04_10}),
\begin{eqnarray}
\nonumber &&[I_j^{k}(Y_{ji})-I_j^{k+1}(Y_{ji})]G_{j}^{k+1}(Y_{ji})=0 \qquad for \qquad i=1,\ldots,N,
\end{eqnarray}
which implies either
\begin{eqnarray}
\nonumber &&I_j^{k}(Y_{ji})-I_j^{k+1}(Y_{ji})=0, \qquad i.e. \qquad I_j^{k}(Y_{ji})=I_j^{k+1}(Y_{ji})
\end{eqnarray}
or
\begin{eqnarray}
\nonumber &&G_{j}^{k+1}(Y_{ji})=0, \qquad i.e. \qquad I_j^{k+1}(Y_{ji})=1.
\end{eqnarray}
It follows that when $C_{MC}$ converges to a stationary point, either $I_j^{k+1}(Y_{ji})$ is invariant, or $I_j^{k+1}(Y_{ji})=1,I_j^{k}(Y_{ji})=0$. That is $I_j^{k+1}(Y_{ji})$ can only change from 0 to 1 at most a finite number of times. Thus the algorithm often cannot oscillate infinitely.
\end{proof}

\begin{theorem}\label{thm_3}
For the fixed AND fusion rule, $(I_1(y_{1}),I_2(y_{2}),\ldots,I_L(y_{L}))$  minimize the Monte Carlo cost function (\ref{Eq_1_18}) if and only if they satisfy the following equations:
\begin{eqnarray}
\label{Eq 4_8}&&I_1(Y_{1i})\cdot I_2(Y_{2i})\cdot\cdots\cdot I_L(Y_{Li})=1 ~~ if ~ \hat{L}(Y_{1i},Y_{2i},\ldots,Y_{Li})\geq0 \quad for\quad i=1,\cdots,N\\
\label{Eq 4_9}&&I_1(Y_{1i})\cdot I_2(Y_{2i})\cdot\cdots\cdot I_L(Y_{Li})=0 ~~ if ~ \hat{L}(Y_{1i},Y_{2i},\ldots,Y_{Li})<0 \quad for\quad i=1,\cdots,N.
\end{eqnarray}
Moreover, one of the optimal solutions is
\begin{eqnarray}
\label{Eq 4_08}&&I_1(Y_{1i})=I_2(Y_{2i})=\cdots=I_L(Y_{Li})=1 ~ ~if ~ \hat{L}(Y_{1i},Y_{2i},\ldots,Y_{Li})\geq0 \quad for\quad i=1,\cdots,N,\\
\label{Eq 4_09}&&I_1(Y_{1i})=I_2(Y_{2i})=\cdots=I_L(Y_{Li})=0 ~ ~if ~\hat{L}(Y_{1i},Y_{2i},\ldots,Y_{Li})<0 \quad for\quad i=1,\cdots,N.
\end{eqnarray}
\end{theorem}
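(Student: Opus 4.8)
The plan is to exploit the special structure of the AND rule to collapse the Monte Carlo cost into a sum that decouples over the sample index $i$, reducing the minimization to an elementary term-by-term sign test.

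First I would make the AND rule explicit: the fusion center declares $H_1$ (that is, $F=1$) exactly when every sensor outputs $1$, so $S_1=\{(1,\ldots,1)\}$ and $F(s_k)=0$ otherwise. Hence at the $i$-th sample the event $\{F=0\}$ occurs if and only if at least one sensor outputs $0$, giving
\[
I_{\Omega_0}(Y_{1i},\ldots,Y_{Li}) = 1-\prod_{j=1}^{L} I_j(Y_{ji}).
\]
Substituting this into (\ref{Eq_1_17}) yields
\[
C_{MC}(I_1,\ldots,I_L;F,N)=c+\frac{1}{N}\sum_{i=1}^{N}\frac{\hat{L}(Y_{1i},\ldots,Y_{Li})}{g(Y_{1i},\ldots,Y_{Li})}\Big(1-\prod_{j=1}^{L}I_j(Y_{ji})\Big).
\]

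Next I would peel off the part independent of the decision rules. The quantity $c+\frac{1}{N}\sum_i \hat{L}(Y_i)/g(Y_i)$ is a fixed constant, so minimizing $C_{MC}$ is the same as maximizing $\frac{1}{N}\sum_i \frac{\hat{L}(Y_i)}{g(Y_i)}\,p_i$ over the binary products $p_i:=\prod_{j=1}^L I_j(Y_{ji})\in\{0,1\}$. The key observation is that, since the samples are distinct, the $L$ decision values feeding $p_i$ are disjoint from those feeding $p_{i'}$ for $i'\neq i$; therefore each $p_i$ may be chosen freely and independently in $\{0,1\}$ and the maximization splits completely across $i$. Because $g(Y_i)>0$, the coefficient of $p_i$ has the same sign as $\hat{L}(Y_i)$, so the $i$-th term is maximized by $p_i=1$ when $\hat{L}(Y_i)>0$ and $p_i=0$ when $\hat{L}(Y_i)<0$; this is precisely (\ref{Eq 4_8})--(\ref{Eq 4_9}), and reading it as a constraint on the sensor rules gives both directions of the equivalence. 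For the \emph{Moreover} clause I would just verify that the all-agree assignment of (\ref{Eq 4_08})--(\ref{Eq 4_09}) produces $p_i=1$ when $\hat{L}(Y_i)\ge 0$ and $p_i=0$ when $\hat{L}(Y_i)<0$, hence satisfies the characterizing product equations and so is one optimal solution.

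The step I expect to need the most care is the boundary case $\hat{L}(Y_i)=0$: there the $i$-th term vanishes for either value of $p_i$, so both are optimal and the strict ``only if'' is tight only up to such ties. I would resolve this by noting that, for a continuous trial density $g$, one has $\hat{L}(Y_i)\neq 0$ almost surely at the drawn samples, so the characterization (\ref{Eq 4_8})--(\ref{Eq 4_9}) holds as stated with probability one. The only other point meriting attention is the realizability claim---that an arbitrary pattern of the $p_i$ is attainable---which rests on the sample points being distinct (again holding almost surely for a continuous $g$), so that the $LN$ decision values are genuinely free variables, consistent with the per-sample definition of the rules in Remark \ref{1}.
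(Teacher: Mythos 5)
Your proof is correct and follows essentially the same route as the paper's: write the AND-rule indicator as $1-\prod_{j=1}^{L} I_j(Y_{ji})$, substitute into (\ref{Eq_1_17}), strip off the constant term, and maximize the resulting sum by a per-sample sign test on $\hat{L}(Y_{1i},\ldots,Y_{Li})/g(Y_{1i},\ldots,Y_{Li})$, then check the all-agree assignment for the \emph{Moreover} clause. The only difference is that you make explicit two technical points the paper glosses over---the tie case $\hat{L}(Y_{1i},\ldots,Y_{Li})=0$, where the ``only if'' direction holds only up to ties, and the distinctness of sample coordinates needed for the per-sample products to be freely and independently choosable---both of which you handle with an almost-sure argument for a continuous trial density.
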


\begin{proof}
For the fixed AND fusion rule, 
\begin{eqnarray}
\label{Eq 04_01} I_{\Omega_0}(Y_{1i},Y_{2i},\ldots,Y_{Li})&=&1-I_1(Y_{1i})\cdot I_2(Y_{2i})\cdot\cdots I_L(Y_{Li}).
\end{eqnarray}
Substituting (\ref{Eq 04_01}) into (\ref{Eq_1_17}) and simplifying, we have
\begin{eqnarray}
\nonumber &&C_{MC}(I_1(y_1),\ldots,I_L(y_L);F,N)\\
\nonumber &=&\frac{1}{N}\sum_{i=1}^{N}\{1-I_1(Y_{1i})\cdot I_2(Y_{2i})\cdot\cdots I_L(Y_{Li})\}\cdot\frac{\hat{L}(Y_{1i},Y_{2i},\ldots,Y_{Li})}{g(Y_{1i},Y_{2i},\ldots,Y_{Li})}+c\\
\nonumber &=&\frac{1}{N}\sum_{i=1}^{N}\frac{\hat{L}(Y_{1i},Y_{2i},\ldots,Y_{Li})}{g(Y_{1i},Y_{2i},\ldots,Y_{Li})}-\frac{1}{N}\sum_{i=1}^{N}I_1(Y_{1i})\cdot I_2(Y_{2i})\cdot \cdots I_L(Y_{Li})\cdot\frac{\hat{L}(Y_{1i},Y_{2i},\ldots,Y_{Li})}{g(Y_{1i},Y_{2i},\ldots,Y_{Li})}+c\\
\nonumber &=&C_0-\tilde{C}_{MC}(I_1(y_1),\ldots,I_L(y_L);F,N)
\end{eqnarray}
where
\begin{eqnarray}
\nonumber C_0&=&\frac{1}{N}\sum_{i=1}^{N}\frac{\hat{L}(Y_{1i},Y_{2i},\ldots,Y_{Li})}{g(Y_{1i},Y_{2i},\ldots,Y_{Li})}+c,\\
\label{Eq 4_10}\tilde{C}_{MC}(I_1(y_1),\ldots,I_L(y_L);F,N)&=&\frac{1}{N}\sum_{i=1}^{N}I_1(Y_{1i})\cdot I_2(Y_{2i})\cdot \cdots I_L(Y_{Li})\cdot\frac{\hat{L}(Y_{1i},Y_{2i},\ldots,Y_{Li})}{g(Y_{1i},Y_{2i},\ldots,Y_{Li})}.
\end{eqnarray}
$C_0$ is a constant, then minimizing $C_{MC}(I_1(y_1),\ldots,I_L(y_L);F,N)$ is equivalent to maximize $\tilde{C}_{MC}(I_1(y_1),$ $\ldots,I_L(y_L);F,N)$. Note that $g(Y_{1i},Y_{2i},\ldots,Y_{Li})>0$ and $I_1(Y_{1i})\cdot I_2(Y_{2i})\cdot\cdots\cdot I_L(Y_{Li})=1$ or 0.  For arbitrary $Y_i=(Y_{1i},Y_{2i},\cdots,Y_{Li})$, $(I_1(y_{1}),I_2(y_{2}),\ldots,I_L(y_{L}))$ maximize $\tilde{C}_{MC}$ if and only if they satisfy the following equations:
\begin{eqnarray}
I_1(Y_{1i})\cdot I_2(Y_{2i})\cdot\cdots\cdot I_L(Y_{Li})=1 ~ if ~ \hat{L}(Y_{1i},Y_{2i},\ldots,Y_{Li})\geq0\quad for\quad i=1,\cdots,N\nonumber\\
I_1(Y_{1i})\cdot I_2(Y_{2i})\cdot\cdots\cdot I_L(Y_{Li})=0 ~ if ~ \hat{L}(Y_{1i},Y_{2i},\ldots,Y_{Li})<0\quad for\quad i=1,\cdots,N\nonumber
\end{eqnarray}
Thus, we have (\ref{Eq 4_8})--(\ref{Eq 4_9}).
\end{proof}

\begin{theorem}\label{thm_4}
For the fixed OR fusion rule, $(I_1(y_{1}),I_2(y_{2}),\ldots,I_L(y_{L}))$ minimize the Monte Carlo cost function (\ref{Eq_1_18}) if and only if they satisfy the following equations:
\begin{eqnarray}
\label{Eq 4_11} (1-I_1(Y_{1i}))\cdot (1-I_2(Y_{2i}))\cdot\cdots\cdot (1-I_L(Y_{Li}))=0 ~ if ~ \hat{L}(Y_{1i},Y_{2i},\ldots,Y_{Li})\geq0 ~ for ~ i=1,\cdots,N\\
\label{Eq 4_12} (1-I_1(Y_{1i}))\cdot (1-I_2(Y_{2i}))\cdot\cdots\cdot (1-I_L(Y_{Li}))=1 ~ if ~ \hat{L}(Y_{1i},Y_{2i},\ldots,Y_{Li})<0 ~ for ~ i=1,\cdots,N.
\end{eqnarray}
Moreover, one of the optimal solutions is
\begin{eqnarray}
\label{Eq 4_13}&&I_1(Y_{1i})=I_2(Y_{2i})=\cdots=I_L(Y_{Li})=1 ~ ~if ~ \hat{L}(Y_{1i},Y_{2i},\ldots,Y_{Li})\geq0 \quad for\quad i=1,\cdots,N\\
\label{Eq 4_14}&&I_1(Y_{1i})=I_2(Y_{2i})=\cdots=I_L(Y_{Li})=0 ~ ~if ~\hat{L}(Y_{1i},Y_{2i},\ldots,Y_{Li})<0 \quad for\quad i=1,\cdots,N.
\end{eqnarray}
\end{theorem}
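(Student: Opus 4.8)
The plan is to mirror the proof of Theorem~\ref{thm_3}, the only structural change being the explicit form of $I_{\Omega_0}$ induced by the OR rule. First I would note that under the OR fusion rule the fusion center decides $H_0$ exactly when every sensor reports $0$, i.e. $F(s_k)=0$ if and only if $s_k=(0,\ldots,0)$. Consequently, for each sample index $i$,
\begin{eqnarray}
\nonumber I_{\Omega_0}(Y_{1i},\ldots,Y_{Li})=\prod_{j=1}^{L}\bigl(1-I_j(Y_{ji})\bigr)=(1-I_1(Y_{1i}))\cdots(1-I_L(Y_{Li})),
\end{eqnarray}
which is the OR analogue of (\ref{Eq 04_01}).

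Next I would substitute this into the Monte Carlo cost (\ref{Eq_1_17}), obtaining
\begin{eqnarray}
\nonumber C_{MC}(I_1(y_1),\ldots,I_L(y_L);F,N)=\frac{1}{N}\sum_{i=1}^{N}\prod_{j=1}^{L}\bigl(1-I_j(Y_{ji})\bigr)\cdot\frac{\hat{L}(Y_{1i},\ldots,Y_{Li})}{g(Y_{1i},\ldots,Y_{Li})}+c.
\end{eqnarray}
Because the summands are decoupled across $i$ and the product $\prod_{j}(1-I_j(Y_{ji}))$ takes only the values $0$ or $1$, minimizing the sum reduces to minimizing each term separately over that binary choice. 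Since $g(\cdot)>0$, the sign of the weight $\hat{L}(Y_{1i},\ldots,Y_{Li})/g(Y_{1i},\ldots,Y_{Li})$ equals that of $\hat{L}(Y_{1i},\ldots,Y_{Li})$; hence when $\hat{L}\geq0$ the term is smallest by forcing the product to $0$, and when $\hat{L}<0$ it is smallest (most negative) by forcing the product to $1$. This gives exactly the equivalences (\ref{Eq 4_11})--(\ref{Eq 4_12}).

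Finally I would verify the explicit solution (\ref{Eq 4_13})--(\ref{Eq 4_14}): setting all $I_j(Y_{ji})=1$ when $\hat{L}\geq0$ makes $\prod_{j}(1-I_j(Y_{ji}))=0$, while setting all $I_j(Y_{ji})=0$ when $\hat{L}<0$ makes the product $1$, so this assignment satisfies (\ref{Eq 4_11})--(\ref{Eq 4_12}) and is therefore optimal. I do not expect a genuine obstacle, since the argument is a direct term-by-term minimization; unlike the AND case of Theorem~\ref{thm_3}, where $I_{\Omega_0}=1-\prod_j I_j$ forces one to rewrite the cost as $C_0-\tilde{C}_{MC}$ and maximize, here the cost is already in directly minimizable form. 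The one point needing care is the boundary tie $\hat{L}(Y_{1i},\ldots,Y_{Li})=0$, where both values of the product give the same (zero) contribution: as in Theorem~\ref{thm_3}, this tie is resolved by the convention (consistent with $I[\cdot]$ in (\ref{Eq_3_4})) that assigns the case $\hat{L}\geq0$ to the $H_1$ decision, so that the stated ``if and only if'' holds under this convention.
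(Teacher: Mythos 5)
Your proposal is correct and follows essentially the same route as the paper: identify $I_{\Omega_0}=\prod_{j=1}^{L}(1-I_j(Y_{ji}))$ under the OR rule, substitute into (\ref{Eq_1_17}), and minimize term by term using $g(\cdot)>0$ and the fact that the product is binary. Your extra remark on the tie case $\hat{L}(Y_{1i},\ldots,Y_{Li})=0$ is a careful touch the paper omits, but it does not change the argument.
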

\begin{proof}
For the fixed OR fusion rule,
\begin{eqnarray}
\label{Eq 04_02}I_{\Omega_0}(Y_{1i},Y_{2i},\ldots,Y_{Li})&=&(1-I_1(Y_{1i}))\cdot (1-I_2(Y_{2i}))\cdot\cdots (1-I_L(Y_{Li})).
\end{eqnarray}
Substituting (\ref{Eq 04_02}) into (\ref{Eq_1_17}), we have
\begin{eqnarray}
\nonumber &&C_{MC}(I_1(y_1),\ldots,I_L(y_L);F,N)\\
\nonumber &=&\frac{1}{N}\sum_{i=1}^{N}\{(1-I_1(Y_{1i}))\cdot (1-I_2(Y_{2i}))\cdot\cdots (1-I_L(Y_{Li}))\}\cdot\frac{\hat{L}(Y_{1i},Y_{2i},\ldots,Y_{Li})}{g(Y_{1i},Y_{2i},\ldots,Y_{Li})}+c.
\end{eqnarray}
Since $c$ is a constant, $(1-I_1(Y_{1i}))\cdot (1-I_2(Y_{2i}))\cdot\cdots\cdot (1-I_L(Y_{Li}))=0$ or 1 and $g(Y_{1i},Y_{2i},\ldots,Y_{Li})>0$,  $(I_1(y_{1}),I_2(y_{2}),\ldots,I_L(y_{L}))$ minimize $C_{MC}(I_1(y_1),\ldots,I_L(y_L);F,N)$ if and only if they satisfy the following equations:
\begin{eqnarray}
\nonumber (1-I_1(Y_{1i}))\cdot (1-I_2(Y_{2i}))\cdot\cdots\cdot (1-I_L(Y_{Li}))=0 ~~ if ~ \hat{L}(Y_{1i},Y_{2i},\ldots,Y_{Li})\geq0 ~~ for ~ i=1,\cdots,N\\
\nonumber (1-I_1(Y_{1i}))\cdot (1-I_2(Y_{2i}))\cdot\cdots\cdot (1-I_L(Y_{Li}))=1 ~~ if ~ \hat{L}(Y_{1i},Y_{2i},\ldots,Y_{Li})<0 ~~ for ~ i=1,\cdots,N
\end{eqnarray}
Thus, we have (\ref{Eq 4_11})--(\ref{Eq 4_12}).
\end{proof}

\begin{theorem}\label{cor_4}
For the fixed  AND fusion rule and any initial value, the solution of Monte Carlo  Gauss-Seidel iterative algorithm must converge to the analytically optimal solution given in Theorem \ref{thm_3}.
\end{theorem}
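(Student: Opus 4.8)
The plan is to exploit the special form of $P_{j1}(\cdot)$ forced by the AND rule to show that a single Gauss--Seidel sweep already produces a configuration meeting the optimality characterization of Theorem \ref{thm_3}, and then to invoke finite convergence. First I would make $P_{j1}(\cdot)$ explicit. Comparing the AND expression $I_{\Omega_0}(Y_{1i},\ldots,Y_{Li})=1-\prod_{m=1}^{L}I_m(Y_{mi})$ in (\ref{Eq 04_01}) with the decomposition $I_{\Omega_0}=[1-I_j(Y_{ji})]P_{j1}(\cdot)+P_{j2}(\cdot)$ in (\ref{Eq_1_10}), one finds $P_{j1}(\cdot)=\prod_{m\neq j}I_m(Y_{mi})$ and $P_{j2}(\cdot)=1-\prod_{m\neq j}I_m(Y_{mi})$. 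The \emph{key observation} is that $P_{j1}(\cdot)\in\{0,1\}$ and involves only the decisions carried by the single sample index $i$; consequently the update (\ref{Eq 4_2})--(\ref{Eq 4_4}) of $I_j(Y_{ji})$ depends only on $\{I_m(Y_{mi})\}_{m\neq j}$ and $\hat{L}(Y_{1i},\ldots,Y_{Li})$, so the iteration splits into $N$ independent $L$-bit recursions, one per sample.

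Next I would analyze one full sweep for a fixed sample $i$, writing $\ell=\hat{L}(Y_{1i},\ldots,Y_{Li})$, $b_j=I_j(Y_{ji})$, and $\pi_j$ for the product of the remaining bits (already-updated values for indices $<j$, old values for $>j$), so that the update is $b_j\leftarrow I[\pi_j\ell]$. If $\ell\ge0$, then $I[\pi_j\ell]=1$ whether $\pi_j=0$ (since $I[0]=1$ by (\ref{Eq_3_4})) or $\pi_j=1$ (since $\ell\ge0$); hence the sweep sets every bit to $1$ and $\prod_m b_m=1$, matching (\ref{Eq 4_8}). If $\ell<0$, then $b_j$ becomes $1-\pi_j$, and I would argue by contradiction that the sweep cannot terminate with all bits equal to $1$: the last updated bit obeys $b_L=1-\prod_{m<L}b_m$, so $b_1=\cdots=b_{L-1}=1$ would force $b_L=0$. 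Hence $\prod_m b_m=0$, matching (\ref{Eq 4_9}). Since this contradiction argument uses only the update rule, it shows that at the \emph{end of every sweep}, for every sample $i$, the decisions satisfy the characterization (\ref{Eq 4_8})--(\ref{Eq 4_9}), irrespective of the initial value.

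Finally I would assemble the pieces. By Theorem \ref{thm_3}, any configuration obeying (\ref{Eq 4_8})--(\ref{Eq 4_9}) minimizes $C_{MC}$, so the end of the first sweep is already globally optimal and every later sweep preserves this (consistently with the monotonicity of Lemma \ref{lem_1}). By Theorem \ref{thm_2} the algorithm is finitely convergent, terminating at a fixed point; being the output of a sweep, that fixed point satisfies (\ref{Eq 4_8})--(\ref{Eq 4_9}) and is therefore an analytically optimal solution of Theorem \ref{thm_3}. I expect the main obstacle to be the $\ell<0$ case: one must track the Gauss--Seidel ordering carefully and recognize that the correct invariant is ``$\prod_m b_m=0$'' rather than convergence to the specific all-equal solution (\ref{Eq 4_08})--(\ref{Eq 4_09}), since different initial values generically reach different product-zero fixed points, all of which are optimal.
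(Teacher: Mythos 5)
Your proposal is correct, and it reaches the conclusion by a genuinely different (and in fact stronger) logical route than the paper. Both arguments rest on the same ingredients: under the AND rule $P_{j1}(\cdot)=\prod_{m\neq j}I_m(Y_{mi})\in\{0,1\}$, the convention $I[0]=1$ from (\ref{Eq_3_4}), the sign dichotomy on $\hat{L}$, finite convergence (Theorem \ref{thm_2}), and the characterization of minimizers in Theorem \ref{thm_3}. The paper, however, argues \emph{backwards by contradiction at the terminal iteration}: it assumes the algorithm has stopped at stage $K$ and shows that a violation of (\ref{Eq 4_15})--(\ref{Eq 4_16}) would force the $(K+1)$-th update to flip some bit (a zero bit flips to $1$ when $\hat{L}\geq 0$ since $P_{j1}\hat{L}\geq 0$; the first bit flips to $0$ when $\hat{L}<0$ and all bits equal $1$), contradicting termination. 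You instead argue \emph{forwards}: exploiting the per-sample decoupling of the updates, you show that any single Gauss--Seidel sweep, from any input, outputs a configuration satisfying (\ref{Eq 4_8})--(\ref{Eq 4_9}) --- all bits set to $1$ when $\hat{L}\geq 0$, and product zero when $\hat{L}<0$ via the clean identity $\prod_m b_m=\pi_L(1-\pi_L)=0$. This buys a sharper result than the paper proves: the algorithm attains the global minimum of $C_{MC}$ already after its first sweep, with later sweeps preserving optimality, whereas the paper only certifies the configuration at termination. The paper's version is more economical (no need to track the sweep ordering or the newest-value bookkeeping), but yours makes the mechanism of convergence transparent. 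Your closing caveat --- that the correct invariant on $S_{Y0}$ is $\prod_m b_m=0$ rather than the specific all-zero solution (\ref{Eq 4_09}), since different initializations reach different product-zero optima --- is exactly the reading the paper's own proof adopts in establishing (\ref{Eq 4_16}), so there is no discrepancy on that point.
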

\begin{proof}
Without loss of generality, we assume that Monte Carlo Gauss-Seidel iterative algorithm terminated at $K$-th iteration for any initial value
and $(I_1^{K}(Y_{1i}),I_2^{K}(Y_{2i}),\ldots,I_{L}^{K}(Y_{Li}))$ is the set of L sensor decision rules at $K$-th iteration.
We need to prove that
\begin{eqnarray}
\label{Eq 4_15}I_1^{K}(Y_{1i})\cdot I_2^{K}(Y_{2i})\cdot\cdots\cdot I_L^{K}(Y_{Li})=1 \qquad if\qquad \hat{L}(Y_{1i},Y_{2i},\ldots,Y_{Li})\geq0\quad for\quad i=1,\cdots,N\\
\label{Eq 4_16}I_1^{K}(Y_{1i})\cdot I_2^{K}(Y_{2i})\cdot\cdots\cdot I_L^{K}(Y_{Li})=0 \qquad if\qquad \hat{L}(Y_{1i},Y_{2i},\ldots,Y_{Li})<0\quad for\quad i=1,\cdots,N.
\end{eqnarray}
Define two sets $S_{Y1}$ and $S_{Y0}$,
\begin{eqnarray}
\nonumber S_{Y1}&=&\{Y_{i}|\hat{L}(Y_{1i},Y_{2i},\ldots,Y_{Li})\geq0, i=1,\cdots,N\} \\
\nonumber S_{Y0}&=&\{Y_{i}|\hat{L}(Y_{1i},Y_{2i},\ldots,Y_{Li})<0, i=1,\cdots,N\}.
\end{eqnarray}

Firstly, we prove (\ref{Eq 4_15}) by a contradiction.
If there exists a sample $Y_{m}\in S_{Y1}$ such that $I_1^{K}(Y_{1m})\cdot I_2^{K}(Y_{2m})\cdot\cdots I_L^{K}(Y_{Lm})=0$, which implies that there must exist some $j$ such that $I_j^{K}(Y_{jm})=0$. For the fixed AND fusion rule, by (\ref{Eq_1_12}), for $j=1,\ldots,L$
\begin{eqnarray}
\nonumber &&P_{j1}(I_1(Y_{1m}),\ldots,I_{j-1}(Y_{(j-1)m}),I_{j+1}(Y_{(j+1)m}),\ldots,I_{L}(Y_{Lm}))\\
\label{Eq 4_13} &=&I_1(Y_{1m})\cdot\cdots\cdot I_{j-1}(Y_{(j-1)m})\cdot I_{j+1}(Y_{(j+1)m})\cdot\cdots\cdot I_L(Y_{Lm}).
\end{eqnarray}
Thus, $P_{j1}(I_1(Y_{1m}),\ldots,I_{j-1}(Y_{(j-1)m}),I_{j+1}(Y_{(j+1)m}),\ldots,I_{L}(Y_{Lm}))=1$ or 0 for $j=1,\ldots,L$. Moreover, $P_{j1}(I_1^{K+1}(Y_{1m}),\ldots,I_{j-1}^{K+1}(Y_{(j-1)m}),I_{j+1}^{K}(Y_{(j+1)m}),\ldots,I_{L}^{K}(Y_{Lm}))=1$ or 0 for $j=1,\ldots,L$.
We can conclude that $P_{j1}(I_1^{K+1}(Y_{1m}),\ldots,I_{j-1}^{K+1}(Y_{(j-1)m}),I_{j+1}^{K}(Y_{(j+1)m}),\ldots,I_{L}^{K}(Y_{Lm}))\cdot\hat{L}(Y_{1m},Y_{2m},\ldots,$ $Y_{Lm})\geq0$ because of $Y_{m}\in S_{Y1}$, that is $\hat{L}(Y_{1m},Y_{2m},\ldots,Y_{Lm})\geq0$. By (\ref{Eq 4_2})-(\ref{Eq 4_4}), $I_j^{K+1}(Y_{jm})=1\neq I_j^{K}(Y_{jm})=0$.  It is a contradiction. Thus, we have (\ref{Eq 4_15}).

Secondly, we prove (\ref{Eq 4_16}) by a contradiction.
If there exists a sample $Y_{n}\in S_{Y0}$ such that $I_1^{K}(Y_{1n})\cdot I_2^{K}(Y_{2n})\cdot\cdots I_L^{K}(Y_{Ln})=1$, which implies that $I_j^{K}(Y_{jn})=1$ for $j=1,\ldots,L$. By (\ref{Eq 4_13}), we can conclude $P_{11}(I_{2}^{K}(Y_{2n}),\ldots,I_{L}^{K}(Y_{Ln}))=1$.
Since $Y_{n}\in S_{Y0}$, that is $\hat{L}(Y_{1n},Y_{2n},\ldots,Y_{Ln})<0$. Thus,
\begin{eqnarray}
\nonumber P_{11}(I_{2}^{K}(Y_{2n}),\ldots,I_{L}^{K}(Y_{Ln}))\cdot\hat{L}(Y_{1n},Y_{2n},\ldots,Y_{Ln})<0\nonumber
\end{eqnarray}
By Algorithm \ref{alg_1} and (\ref{Eq_3_4}), we can conclude that $I_1^{K+1}(Y_{1n})=0\neq I_1^{K}(Y_{1n})=1$. It is a contradiction. Thus, we have (\ref{Eq 4_16}).
\end{proof}

\begin{remark}\label{Remark_5}
Since the OR fusion rule and AND fusion rule are dual each other, for the fixed OR fusion rule and any initial value, the solution of  Algorithm \ref{alg_1} must converge to the analytically optimal solution given in Theorem \ref{thm_4}.
\end{remark}

\section{Numerical Examples}\label{sec_5}

To evaluate the performance of the new algorithm, we investigate some examples
with large number of sensors where observation signal $s$  and observation noises are assumed  Gaussian and independent. Thus, the observations are dependent. 
Since the previous distributed detection algorithm with general dependent observations does not work when the number of sensors is more than 5, we evaluate the new algorithm by comparing it with the centralized likelihood ratio method with 10 sensors and 100 sensors, respectively.

\subsection{Ten sensors}\label{sec_5_2}
We consider Monte Carlo importance sampling methods with AND, OR and 2 out of 5 (2/5) fusion rule.
\begin{example} Let us consider ten sensors model with observation signal $s$ and observation noises $v_{1}$, $v_{2}$, \ldots, $v_{10}$,
\begin{eqnarray}
\nonumber &&H_{1}:~~ y_{i}=s+v_{i}, ~~ for ~ i=1,\ldots,10\\
\nonumber &&H_{0}:~~ y_{i}=v_{i}, ~~ for ~ i=1,\ldots,10
\end{eqnarray}
where $s$, $v_{1}$, $v_{2}$ \ldots, $v_{10}$ are all mutually independent and
\begin{eqnarray}
\nonumber &&s\sim N(1,0.4), ~~ v_{i}\sim N(0,0.6), ~~ for ~ i=1,\ldots,10
\end{eqnarray}
Therefore, the two conditional pdfs given $H_{0}$ and $H_{1}$ are
\begin{eqnarray}
\nonumber && p(y_1,y_2,\ldots,y_{10} |H_{1})\sim N\left(\left(\begin{array}{c}
                                   1 \\
                                   1 \\
                                   \vdots\\
                                   1 \\
                                 \end{array}
                               \right),\left(
                               \begin{array}{cccc}
                               1 & 0.4 & \cdots & 0.4 \\
                               0.4 & 1 & \cdots & 0.4 \\
                               \vdots & \vdots & \ddots & \vdots \\
                               0.4 & 0.4 & \cdots & 1 \\
                               \end{array}
                               \right)
                               \right)\\
\nonumber && p(y_1,y_2,\ldots,y_{10} |H_{0})\sim N\left(\left(\begin{array}{c}
                                   0 \\
                                   0 \\
                                   \vdots\\
                                   0 \\
                                 \end{array}
                               \right),\left(
                               \begin{array}{cccc}
                               0.6 & 0 & \cdots & 0 \\
                               0 & 0.6 & \cdots & 0 \\
                               \vdots & \vdots & \ddots & \vdots \\
                               0 & 0 & \cdots & 0.6 \\
                               \end{array}
                               \right)
                               \right)
\end{eqnarray}
\end{example}

\begin{figure}[!htb]
  \centering
  \scalebox{1}[1]{
  \includegraphics[width=\hsize]{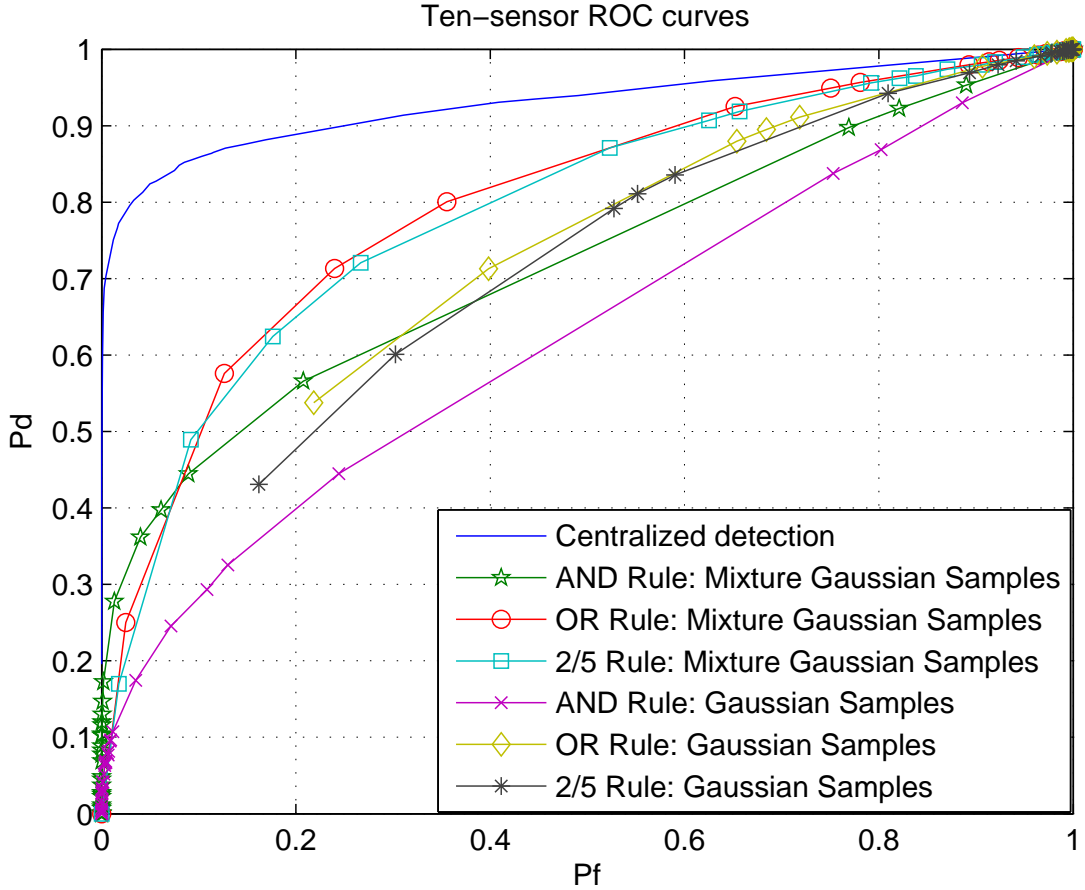}}
  \caption{Ten-sensor ROC curves.}
  \label{fig_02}
\end{figure}

In Figure \ref{fig_02}, the ROC curves for Centralized algorithm, Algorithm \ref{alg_1} with a mixture Gaussian trial distribution and Algorithm \ref{alg_1} with a Gaussian trial distribution are provided  where the AND, OR and 2 out of 5 (2/5) fusion rules are considered, respectively. For Algorithm \ref{alg_1}, we draw $N=1000$ samples from the trial distribution.
The initial values of the sensor rule are $I_{i}(y_{i})=I[3y_{i}-4]$, for $i=1,\cdots,L$.

The solid line is the ROC curve calculated by the centralized algorithm. The star line, circle line and square line are the ROC curves for the fixed AND, OR and 2 out of 5 (2/5) fusion rule calculated by Algorithm \ref{alg_1} with Mixture-Gaussian trial distribution, respectively. The $\times$ line, diamond line and $*$ line line are the ROC curves for the fixed AND, OR and 2 out of 5 (2/5) fusion rule calculated by Algorithm \ref{alg_1} with Gaussian trial distribution, respectively.

From Figure \ref{fig_02}, we have the following observations:
\begin{itemize}
  \item The performance of Algorithm \ref{alg_1} with Mixture-Gaussian trial distribution is better than that of Algorithm \ref{alg_1} with Gaussian trial distribution. The reason may be that the optimal trial distribution in (\ref{Eq_1_17}) should be $g(y_1,y_2,\ldots,y_L)\varpropto |I_{\Omega_0}(y_1,y_2,\ldots,y_L)\hat{L}(y_1,y_2,\ldots,y_L)|$ (see, e.g., \cite{Liu01,Robert05}) and $|\hat{L}(y_1,y_2,\ldots,y_L)|=|ap(y_1,\ldots,y_L|H_1)-bp(y_1,\ldots,y_L|H_0)|$ which is similar to Mixture-Gaussian. Thus, the performance  based on Mixture-Gaussian trial distribution is better than that of Gaussian trial distribution.
  \item When probability of a false alarm $P_f$  is small, the performance of the fixed AND fusion rule is better than that of the fixed OR fusion rule and vice versa.
  \item For the same parameters, most of points of the AND fusion rule converge to the $(0, 0)$ and most of points of the OR fusion rule converge to the $(1, 1)$. The reason may be the AND fusion rule corresponds to a smaller  probability of a false alarm $P_f$ than that of the OR fusion rule.
\end{itemize}

\subsection{One hundred sensors}\label{sec_5_3}
\begin{example}Let us consider a surveillance model. There is a target/signal $s$ which may cross a surveillance region from one of 50 paths with an equal probability. 100 sensors are deployed on the 50 paths separately. Each path has two sensors. The 100 sensors transmit the decision 0 or 1 to the fusion center. We consider a given fusion rule that if there is only one path where two sensors make a decision (1, 1), then the fusion center makes a decision 1; otherwise, make a decision 0.

%

The signal $s$ and observation noises $v_{1}$, $v_{2}$, \ldots, $v_{100}$
are all mutually independent and
\begin{eqnarray}
\nonumber &&s\sim N(1,0.4), ~~ v_{i}\sim N(0,0.6), ~~ for ~ i=1,\ldots,100.
\end{eqnarray}
Thus, the two conditional probability density functions (pdfs) given $H_{0}$ and $H_{1}$ are
\begin{eqnarray}
\nonumber &&p(y_1,y_2,\ldots,y_{100} |H_{0})\sim N\left(\mu_0,~ \Sigma_0\right),\qquad\qquad\qquad\qquad~\\
\nonumber &&p(y_1,y_2,\ldots,y_{100} |H_{1})\sim \sum_{i=1}^{50}P\times N\left(\mu_{1,i},~\Sigma_{1,i}\right),
\end{eqnarray}
where $P=1/50$,
\begin{eqnarray}
\nonumber &&\mu_0=(0,\cdots,0)_{100\times1}',~~\quad~~\Sigma_0=diag(0.6,\cdots,0.6)_{100\times100},\\
\nonumber &&\mu_{1,1}=(\mu',0,\cdots,0)_{100\times1}', \cdots,\mu_{1,50}=(0,\cdots,0,\mu')_{100\times1}',\\
\nonumber &&\Sigma_{1,1}=diag(\Sigma,0.6,\cdots,0.6)_{100\times100},\cdots,\Sigma_{1,50}=diag(0.6,\cdots,0.6,\Sigma)_{100\times100},\\
\nonumber &&\mu=\left(\begin{array}{c}
                                   1 \\
                                   1 \\
                                 \end{array}
                               \right)_{2\times1},\Sigma=\left(
                               \begin{array}{cccc}
                               1 & 0.4 \\
                               0.4 & 1 \\
                               \end{array}
                               \right)_{2\times2}.
\end{eqnarray}
\end{example}

\begin{figure}[!htb]
  \centering
  \scalebox{1}[1]{
  \includegraphics[width=\hsize]{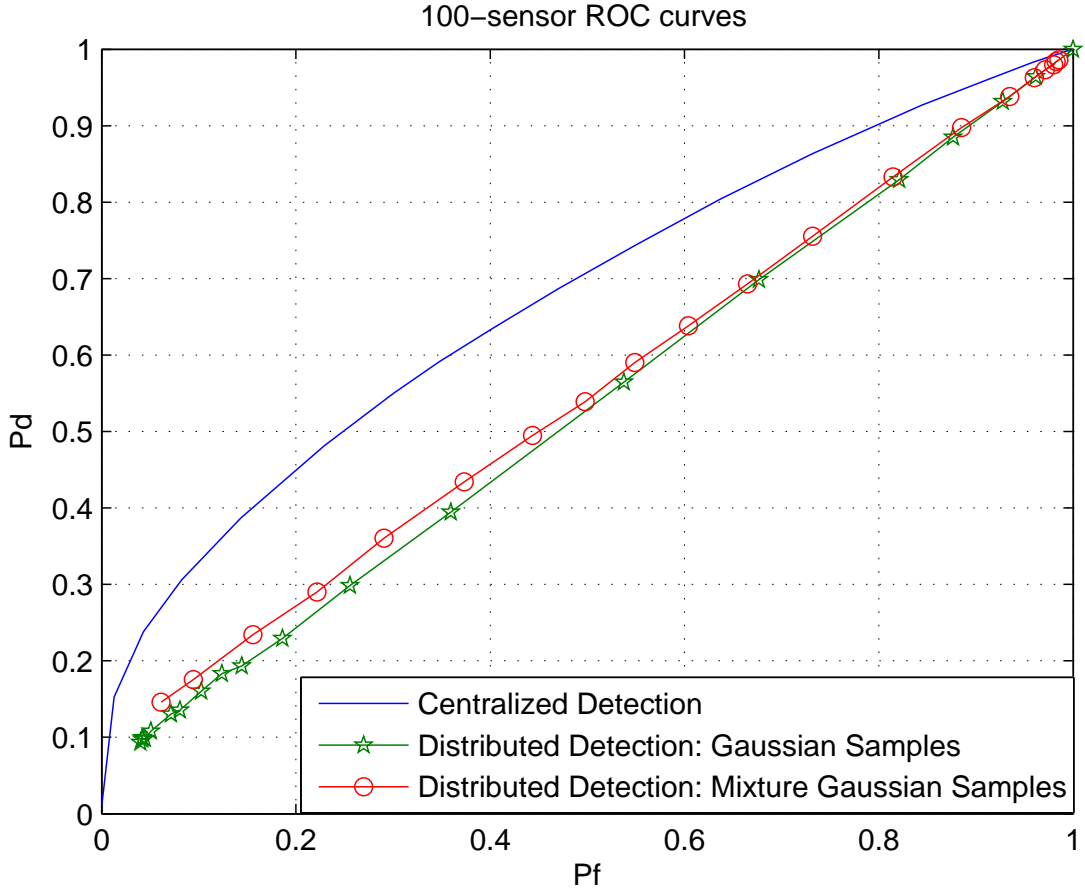}}
  \caption{One-hundred-sensor ROC curves}
  \label{fig_03}
\end{figure}

In Figure \ref{fig_03}, the ROC curves for Centralized algorithm, Algorithm \ref{alg_1} with a mixture Gaussian trial distribution and Algorithm \ref{alg_1} with a Gaussian trial distribution are provided. For Algorithm \ref{alg_1}, we draw $N=10000$ samples from the trial distribution to derive the optimal sensor decision rules.

The solid line is the ROC curve calculated by the centralized algorithm. The circle line is the ROC curve for the fixed fusion rule by Algorithm \ref{alg_1} with Mixture-Gaussian trial distribution. The star line is the ROC curve for the fixed fusion rule calculated by with Gaussian trial distribution.

From Figure \ref{fig_03}, it can be seen that the performance of Algorithm \ref{alg_1} with Mixture-Gaussian trial distribution is better than that of Algorithm \ref{alg_1} with Gaussian trial distribution. The reason is similar to the case of two sensors or ten sensors.
This example also shows that the new method can be applied to large number of sensor networks when the fusion rule is fixed.

\section{Conclusion}\label{sec_6}
In the paper, we have proposed a Monte Carlo framework for the distributed detection fusion with high-dimension conditionally dependent observations. By using the Monte Carlo importance sampling, we derived a necessary condition for optimal sensor decision rules  so that a Gauss-Seidel optimization approach can be obtained to search the optimal sensor decision rules. We proved that the discretized algorithm is finitely convergent. The complexity of the new algorithm is order of $O(LN)$ compared with $O(LN^L)$ of the previous algorithm where $L$ is the number of sensors and $N$ is the sample size in the importance sampling draw. Thus, the proposed methods allows us to design the large sensor networks with general dependent observations. Furthermore, an interesting result is that, for the fixed AND or OR fusion rules, we have analytically derived the optimal solution in the sense of minimizing the approximated Bayesian cost function. In general, the solution of the Gauss-Seidel algorithm is only local optimal. However, in the new framework, we have proved that the solution of Gauss-Seidel algorithm is same as the analytically optimal solution in the case of the AND or OR fusion rule. The typical examples with dependent observations and large number of sensors are examined under this new framework. The results of numerical examples demonstrate the effectiveness of the new algorithm class. Future work will involve the generalization of the Monte Carlo framework for parallel networks to all kinds of networks. When the number of the sensors is very large, how to optimize the fusion rule is another challenging problem.

\end{document}